\documentclass[conference]{IEEEtran}

\usepackage{amsthm}
\usepackage{amsmath}
\usepackage{amssymb}
\usepackage{amsmath}
\usepackage{amssymb}
\usepackage{epsfig}
\usepackage{epsf}
\usepackage{subfigure}
\usepackage{graphicx}
\usepackage{url}
\usepackage{cite}
\usepackage{color}

\usepackage{verbatim}
\usepackage{environ}

\usepackage[numbers,sort&compress]{natbib}

\usepackage{enumitem,kantlipsum}

\def\BibTeX{{\rm B\kern-.05em{\sc i\kern-.025em b}\kern-.08em
    T\kern-.1667em\lower.7ex\hbox{E}\kern-.125emX}}

\newtheorem{theorem}{Theorem}
\newtheorem{definition}{Definition}

\newtheorem{lemma}{Lemma}

\newtheorem{lemmarep}{Lemma}






\newcommand{\1}{{\bf 1}}

\newcommand{\ep}{\epsilon}

\newcommand{\A}{{\mathcal A}}
\newcommand{\B}{{\mathcal B}}
\newcommand{\C}{{\mathcal C}}

\newcommand{\E}{{\mathcal E}}

\newcommand{\X}{{\mathcal X}}
\newcommand{\vX}{{\vec{X}}}
\newcommand{\Y}{{\mathcal Y}}
\newcommand{\vY}{{\vec{Y}}}

\newcommand{\x}{{\bf x}}

\newcommand{\aln}[1]{\begin{align*}#1\end{align*}}

\newcommand{\al}[1]{\begin{align}#1\end{align}}

\newcounter{numcount}
\setcounter{numcount}{1}

\newcommand{\leqnum}{\stackrel{(\roman{numcount})}{\leq\;}\stepcounter{numcount}}

\newcommand{\cnt}{$(\roman{numcount})$ \stepcounter{numcount}}
\newcommand{\rescnt}{\setcounter{numcount}{1}}

\newif\iflong
\longfalse

\newif\ifdraft
\drafttrue

\begin{document}

\title{
Communicating over the Torn-Paper Channel
}

\author{
\IEEEauthorblockN{Ilan Shomorony}
\IEEEauthorblockA{
ECE Department \\
	University of Illinois at Urbana-Champaign\\
	ilans@illinois.edu
	\vspace{-7mm}} 
	\and
\IEEEauthorblockN{Alireza Vahid}
\IEEEauthorblockA{Department of Electrical Engineering\\
University of Colorado Denver\\
alireza.vahid$@$ucdenver.edu \vspace{-7mm}}
	}
\maketitle

\begin{abstract}
We consider the problem of communicating over a channel that randomly ``tears'' the message block into small pieces of different sizes and shuffles them.
For the binary torn-paper channel with block length $n$ and pieces of length ${\rm Geometric}(p_n)$, we characterize the capacity as $C = e^{-\alpha}$, where $\alpha = \lim_{n\to\infty} p_n \log n$.
Our results show that the case of ${\rm Geometric}(p_n)$-length fragments and the case of deterministic length-$(1/p_n)$ fragments are qualitatively different and, surprisingly, the capacity of the former is larger.
Intuitively, this is due to the fact that, in the random fragments case, large fragments are sometimes observed, which boosts the capacity.
\end{abstract}


\section{Introduction}
\label{Section:Introduction}

Consider the problem of transmitting a message by
writing it on a piece of paper, which will be torn into small pieces of random sizes and randomly shuffled.
This coding problem is illustrated in Figure~\ref{fig:channel}. 
We refer to it as the \emph{torn-paper coding}, in allusion to the classic dirty-paper coding problem \cite{dirtypaper}.

This problem is mainly motivated by macromolecule-based (and in particular DNA-based) data storage, which has recently received significant attention due to several proof-of-concept DNA storage systems  \cite{church_next-generation_2012,goldman_towards_2013,grass_robust_2015,bornholt_dna-based_2016,erlich_dna_2016,organick_scaling_2017}. 
In these systems, data is written onto synthesized DNA molecules, which are then stored in solution.
During synthesis and storage, molecules in solution are subject to random breaks and, due to the unordered nature of macromolecule-based storage, the resulting pieces are shuffled \cite{heckel_characterization_2018}.
Furthermore, the data is read via high-throughput sequencing technologies, which is typically preceded by physical fragmentation of the DNA with techniques like \emph{sonication} \cite{pomraning2012library}.
In addition, the torn-paper channel is related to the DNA shotgun sequencing channel, studied in \cite{MotahariDNA,BBT,gabrys2018unique}, but in the context of variable-length reads, which are obtained in nanopore sequencing technologies \cite{laver2015assessing,mao2018models}.

\begin{figure}[b]
\vspace{-6mm}
\centering
\includegraphics[width=0.70\linewidth]{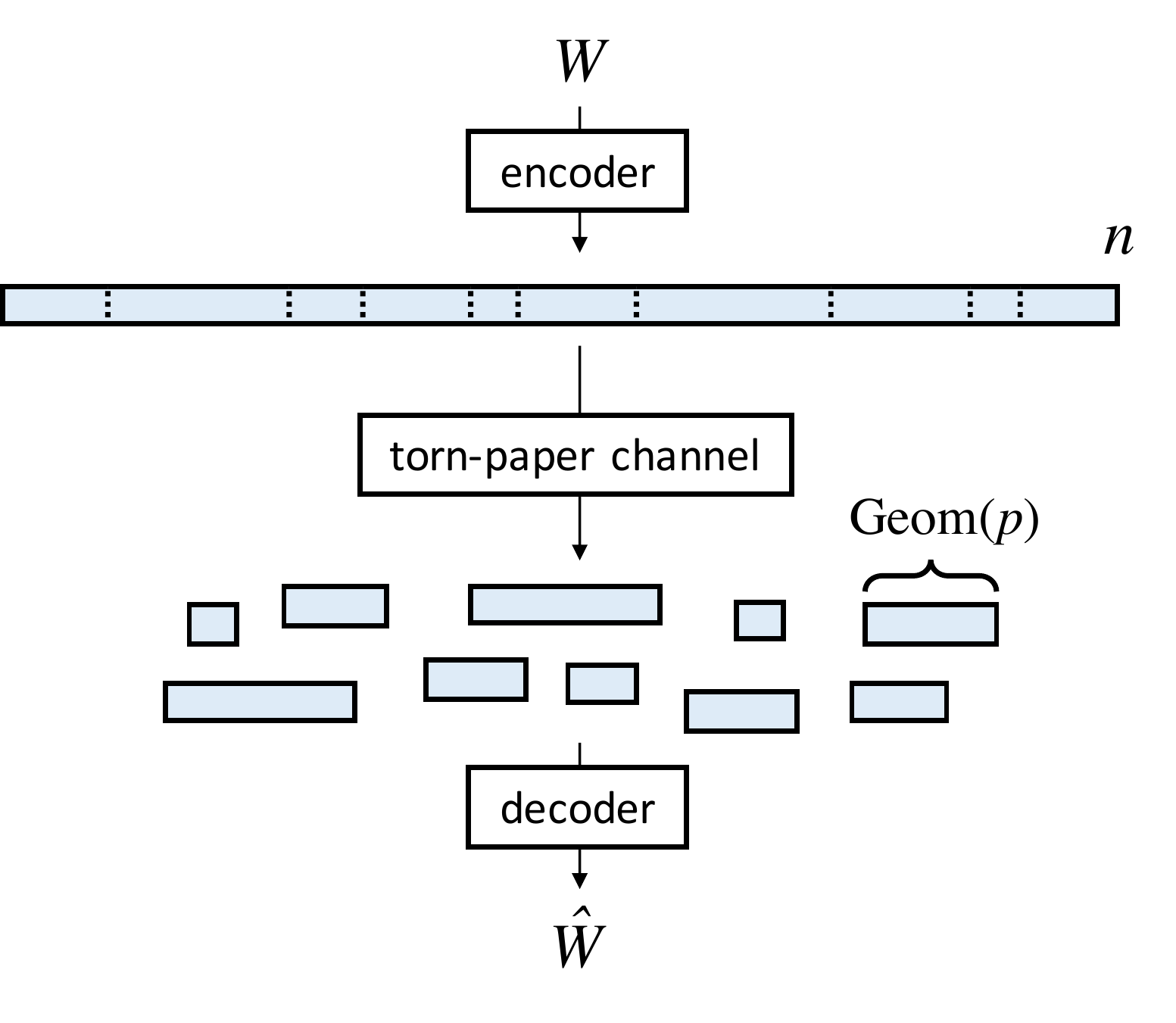}
\vspace{-4mm}
\caption{The torn-paper channel.
\label{fig:channel}
}
\end{figure}

We consider the scenario where the channel input is a length-$n$ binary string, which is then torn into pieces of lengths $N_1,N_2,...$, each of which has a ${\rm Geometric}(p_n)$ distribution.
The channel output is the unordered set of these pieces.
As we will see, even this noise-free version of the torn-paper coding problem is non-trivial.

To obtain some intuition, notice that $E[N_i] = 1/p_n$, and hence it is reasonable to compare our problem to the case where the tearing points are evenly separated, and $N_i = 1/p_n$ for $i=1,2,...,n p_n$ with probability $1$.
In this case, the channel becomes a \emph{shuffling channel}, similar to the one considered in \cite{noisyshuffling}, but with no noise. 
Coding for the case of deterministic fragments of length $N_i = 1/p_n$ is easy: since the tearing points are known, we can prefix each fragment with a unique identifier, which allows the decoder to correctly order the $n p_n$ fragments.
From the results in \cite{noisyshuffling}, such an index-based coding scheme is capacity-optimal, and any achievable rate in this case must satisfy, for large $n$,
\al{
R < (1 - p_n \log n)^+.
\label{eq:capshuffling}
}
If we let $\alpha = \lim_{n\to \infty} p_n \log n$, the capacity for this case becomes $(1-\alpha)^+$.

It is not clear a priori whether the capacity of the torn-paper channel should be higher or lower than $(1-\alpha)^+$.
The fact that the tearing points are not known to the encoder makes it challenging to place a unique identifier in each fragment, suggesting that the torn-paper channel is ``harder'' and should have a lower capacity.
The main result of this paper contradicts this intuition and shows that the capacity of the torn-paper channel with ${\rm Geometric}(p_n)$-length fragments is higher than $(1-\alpha)^+$.
More precisely, we show that the capacity of the torn-paper channel is 
$C = e^{-\alpha}$.
Intuitively, this boost in capacity comes from the tail of the geometric distribution, which guarantees that a fraction of the fragments will be significantly larger than the mean $E[N_i] = 1/p_n$.
This allows the capacity to be positive even for $\alpha \geq 1$, in which case the capacity of the deterministic-tearing case in (\ref{eq:capshuffling}) becomes $0$.


\section{Problem Setting}
\label{Section:Problem}

We consider the problem of coding for the torn-paper  channel, illustrated in Figure~\ref{fig:channel}.
The transmitter encodes a message $W \in \{1,...,2^{nR}\}$ into a length-$n$ binary codeword $X^n \in \mathbb{F}_2^n$.
The channel output is a set of binary strings
\begin{align}
\mathcal{Y} = \left\{ \vY_1, \vY_2, \ldots, \vY_K \right\}.
\end{align}
The process by which $\mathcal{Y}$ is obtained is described next. 
\begin{enumerate}[wide]
\item The channel tears the input sequence into segments of ${\rm Geometric}(p_n)$-length 
for a \emph{tearing probability} $p_n$. 
More specifically, let $N_1,N_2,...$ be i.i.d.~Geometric$(p_n)$ random variables. 
Let $K$ be the smallest index such that
$
\sum_{i=1}^{K}{N_{i}}  \geq n.
$
Notice that $K$ is also a random variable. 

The channel tears $X^n$ into segments $\vX_1,...,\vX_K$, where 
\aln{
\vX_i = \left[X_{1+\sum_{j=1}^{i-1}{N_j}},...,X_{\sum_{j=1}^{i}{N_j}} \right],
}
for $i=1,...,K-1$ and
\aln{
\vX_K = \left[X_{1+\sum_{j=1}^{K-1}{N_j}},...,X_{n} \right].
}
We note that this process is equivalent to independently tearing the message in between consecutive bits with probability $p_n$.
More precisely, let $T_2,T_3,...,T_n$ be binary indicators of whether there is a cut between $X_{i-1}$ and $X_i$. 
Then, letting $T_i$s be i.i.d.~${\rm Bernoulli}(p_n)$ random variables results in independent fragments of length ${\rm Geometric}(p_n)$.
Also, $K = 1 + \sum_{i=2}^n T_i$, implying that 
$E[K] = 1 + (n-1)p_n = np_n + (1-p_n)$.



\item Given $K$, let $[\pi_1,...,\pi_K]$ be a uniformly distributed random permutation on $[1,2,\ldots, K]$.
The output segments are then obtained by setting, for $i=1,...,K$,
$
\vY_i = \vX_{\pi_i}.
$

\end{enumerate}

We note that there are no bit-level errors, \emph{e.g.}, bit flips, in this process. 
We also point out that we allow the tearing probability to be a function of the block length $n$, thus, including subscript $n$ in $p_n$.



A code with rate $R$ for the torn-paper channel is a set $\C$ of $2^{nR}$ binary codewords, each of length $n$, together with a decoding procedure that maps a set $\Y$ of variable-length binary strings to an index $\hat W \in \{1,...,2^{nR}\}$.
The message $W$ is assumed to be chosen uniformly at random from $\{1,...,2^{nR}\}$, and the error probability of a code is defined accordingly.
A rate $R$ is said to be achievable if there exists a sequence of rate-$R$ codes $\{\C_n\}$, with blocklength $n \to \infty$, whose error probability tends to $0$ as $n \to \infty$.
The capacity $C$ is defined as the supremum over all achievable rates.
Notice that $C$ should be a function of the sequence of tearing probabilities $\{p_n\}_{n=1}^\infty$.

\vspace{3mm}

\noindent {\bf Notation:}
Throughout the paper, $\log(\cdot)$ represents the logarithm base $2$, while $\ln(\cdot)$ represents the natural logarithm.
For functions $f(n)$ and $g(n)$, we write $g(n) = o(f(n))$ if $g(n)/f(n) \to 0$ as $n \to \infty$.
For an event $A$, we let $\1_A$ or $\1\{A\}$ be the binary indicator of $A$.

\section{Main Results}
\label{Section:Main}



If the encoder had access to the tearing locations ahead of time, a natural coding scheme would involve placing unique indices on every fragment, and using the remaining bits for encoding a message.
In particular, if the message block broke evenly into $n p_n$ pieces of length $[N_1] = 1/p_n$, results from \cite{noisyshuffling} imply that placing a unique index of length $\log(np_n)$ in each fragment is capacity optimal. 
In this case, the capacity is $(1-\alpha)^+$, where $\alpha = \lim_{n\to \infty} p_n\log n$ (assuming the limit exists).
If $\alpha \geq 1$, no positive rate is achievable.


However, in our setting, the fragment lengths are random and the same index-based approach cannot be used.
Because we do not know the tearing points, we cannot place indices at the beginning of each fragment.
Furthermore, while the expected fragment length may be long, some fragments may be shorter than $\log (np_n)$ and a unique index could not be placed in them even if we knew the tearing points.
Our main result shows that, surprisingly, the random tearing locations and fragment lengths in fact increases the channel capacity.




\begin{theorem} \label{thm:capacity}
The capacity of the torn-paper channel is 
\aln{
C = e^{-\alpha},
}
where $\alpha = \lim_{n \to \infty} {p_n \log n}$.
\end{theorem}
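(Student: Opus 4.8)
We prove the matching bounds $R\le e^{-\alpha}$ (converse) and ``every $R<e^{-\alpha}$ is achievable''. Both rest on the same dichotomy: a fragment shorter than $\approx\log n$ cannot be localized among the $\approx np_n$ fragments and is essentially useless, while a fragment of length $\ell\ge\log n$ carries $\ell$ usable bits but must ``pay'' about $\log(np_n)\approx\log n$ bits for being placed correctly. The accounting closes because the fragment covering a fixed bit obeys the length-biased (inspection-paradox) law: $\Pr[\text{a fixed bit lies in a fragment of length}\ge L]=(1-p_n)^{L-1}(Lp_n+1-p_n)$, which at $L=\log n$ tends to $e^{-\alpha}(1+\alpha)$. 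Hence the expected total length in fragments of length $\ge\log n$ is $\sim ne^{-\alpha}(1+\alpha)$, the expected number of them is $\sim np_n e^{-\alpha}$, and the ordering tax of $\sim\log n$ bits per long fragment totals $\sim n\alpha e^{-\alpha}$; the difference is exactly $ne^{-\alpha}(1+\alpha)-n\alpha e^{-\alpha}=ne^{-\alpha}$. One may assume $0<\alpha<\infty$, so $p_n\to0$; the cases $\alpha=0$ ($C=1$) and $\alpha=\infty$ ($C=0$) follow from the same estimates with a rescaled threshold.

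\emph{Converse.} Let $T=(T_2,\dots,T_n)$ be the i.i.d.\ ${\rm Bernoulli}(p_n)$ cut indicators, independent of $W$. Since $\Y$ is a deterministic function of $(X^n,T)$, the chain $W-X^n-(\Y,T)$ gives $I(W;\Y)\le I(X^n;\Y)\le I(X^n;\Y\mid T)=H(\Y\mid T)$, and Fano's inequality then gives $nR\le H(\Y\mid T)+o(n)$. Given $T$, the multiset $\Y$ holds exactly $m_\ell=m_\ell(T)$ strings of each length $\ell$, so $H(\Y\mid T)\le E_T\big[\sum_\ell\log\binom{2^\ell+m_\ell-1}{m_\ell}\big]$; as $m\mapsto\log\binom{2^\ell+m-1}{m}$ is concave, Jensen replaces $m_\ell$ by $\bar m_\ell=E[m_\ell]=(1+o(1))\,np_n^2(1-p_n)^{\ell-1}$. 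Split the sum at $\ell^\star=\log(np_n^2)$ (which is $\sim\log n$ and satisfies $p_n\ell^\star\to\alpha$): the terms with $\ell<\ell^\star$ involve $\le 2^{\ell^\star+1}=O(np_n^2)$ distinct strings, each of multiplicity $\le n$, and contribute $O(np_n^2\log n)=o(n)$; for $\ell\ge\ell^\star$, Stirling gives $\log\binom{2^\ell+\bar m_\ell-1}{\bar m_\ell}\le\bar m_\ell\ell-\bar m_\ell\log^{+}\bar m_\ell+O(\bar m_\ell)$, and summing with $\sum_{\ell\ge\ell^\star}\bar m_\ell\ell\to ne^{-\alpha}(1+\alpha)$, $\sum_{\ell\ge\ell^\star}\bar m_\ell\log^{+}\bar m_\ell\to n\alpha e^{-\alpha}$ and $\sum_{\ell\ge\ell^\star}\bar m_\ell=o(n)$ yields $H(\Y\mid T)\le ne^{-\alpha}(1+\alpha)-n\alpha e^{-\alpha}+o(n)=ne^{-\alpha}+o(n)$, hence $R\le e^{-\alpha}$.

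\emph{Achievability.} Fix $R<e^{-\alpha}$ and draw a codebook of $2^{nR}$ codewords i.i.d.\ uniform on $\{0,1\}^n$; the decoder outputs the unique codeword expressible as a concatenation, in some order, of all the strings of $\Y$ (the transmitted codeword is always one). A wrong codeword $c'$ is uniform and independent of $\Y$, so $\Pr[c'\text{ is such a concatenation}\mid\Y]\le D(\Y)\,2^{-n}$, where $D(\Y)$ is the number of distinct concatenations of $\Y$. Separating the fragments of length $<\log n$ (``short'', total length $\Lambda_S$, count $k_S$) from those of length $\ge\log n$ (``long'', count $k_L$) and grouping orderings of a multiset by length, $D(\Y)\le\binom{K}{k_S}\cdot\frac{k_S!}{\prod_{|v|<\log n}m_v!}\cdot k_L!\le 2^{o(n)}\Big(\prod_{\ell<\log n}(2^\ell)^{m_\ell}\Big)k_L!=2^{o(n)}\,2^{\Lambda_S}\,k_L!$. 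On the event --- of probability $1-o(1)$ by Chernoff/McDiarmid applied to $T$ --- that $\Lambda_S\le n\big(1-e^{-\alpha}(1+\alpha)\big)+\gamma n$ and $\log(k_L!)\le n\alpha e^{-\alpha}+\gamma n$, we get $D(\Y)\le 2^{\,n(1-e^{-\alpha})+2\gamma n+o(n)}$, and a union bound over the wrong codewords bounds the error probability by $\Pr[T\text{ atypical}]+2^{nR}D(\Y)2^{-n}\le o(1)+2^{\,n(R-e^{-\alpha}+2\gamma)+o(n)}$, which vanishes once $\gamma<(e^{-\alpha}-R)/2$. The usual averaging argument then yields a deterministic rate-$R$ code family, so $C=e^{-\alpha}$.

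\emph{Main difficulty.} The crux is the exact length-binned bookkeeping that makes ``content of long fragments'' minus ``ordering tax'' collapse to $e^{-\alpha}$ rather than to the looser $e^{-\alpha}(1+\alpha)$; this needs the length-biased identity above together with the asymptotics $\sum_{\ell\ge\log n}\bar m_\ell\ell\to ne^{-\alpha}(1+\alpha)$, $\sum_{\ell\ge\log n}\bar m_\ell\to np_n e^{-\alpha}$ and $\log\bar m_\ell\sim\log n$. The remaining steps --- concentration of $\{m_\ell\}$, $\Lambda_S$ and $k_L$; the concavity/Jensen reduction; the bound on $D(\Y)$; and the degenerate regimes $\alpha\in\{0,\infty\}$ or $p_n\not\to0$ --- are routine.
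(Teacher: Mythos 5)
Your proof is correct and reaches the same bound, but by a genuinely different route on both sides. For the converse, the paper partitions the output by length into coarse bins of width $(\log n)/L$, proves concentration of each bin size $|\Y_k|$ (Lemma~\ref{lem:Ykconc}), bounds $H(\Y_k)$ via the histogram count, proves a separate lemma (Lemma~\ref{lem:entropy}) to move the limit inside the infinite sum, and finally lets $L\to\infty$. You instead condition on the cut process $T$ and exploit $I(X^n;\Y)\le I(X^n;\Y\mid T)=H(\Y\mid T)$, bin by \emph{exact} length $\ell$, and use concavity/Jensen to replace the random multiplicity $m_\ell(T)$ by its mean --- this removes both the $|\Y_k|$ concentration lemma and the limit-interchange lemma, and there is no $L\to\infty$ step; the price is that you must verify the length-biased asymptotics $\sum_{\ell\ge\ell^\star}\bar m_\ell\,\ell\to ne^{-\alpha}(1+\alpha)$ and $\sum_{\ell\ge\ell^\star}\bar m_\ell\log^{+}\bar m_\ell\to n\alpha e^{-\alpha}$ directly rather than via the paper's binned exponential approximation (Lemma~\ref{lem:exp_app1}). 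For achievability, the paper's decoder discards fragments shorter than $\gamma\log n$, unions over $n^{B_1}$ non-contiguous alignments of the long fragments into a wrong codeword, and optimizes the free parameter $\gamma$, obtaining the maximum of $(1+\alpha\gamma-\alpha)e^{-\alpha\gamma}$ at $\gamma=1$; you keep all fragments, require exact reconstruction by concatenation, and bound the number of orderings $D(\Y)=K!/\prod_v m_v!$, with the key observation that short fragments are highly repetitive so their ordering factor is only $2^{\Lambda_S+o(n)}$. The two accountings are equivalent (the paper's ``uncovered'' bits $n(1-c_\gamma)$ play the role of your $\Lambda_S$, and $n^{B_1}$ the role of your $k_L!$), but your version makes the fact that the threshold is forced to be $\log n$ appear automatically rather than via optimization over $\gamma$. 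Both approaches appear sound; what the paper's buys is arguably a more elementary converse (no Jensen, no careful handling of the cross-over at $\ell^\star$), while yours is shorter and avoids two auxiliary lemmas. One small point worth making explicit in your write-up is the uniformity of the estimate $\bar m_\ell=(1+o(1))np_n^2(1-p_n)^{\ell-1}$ (boundary terms make it an upper bound for $\ell$ comparable to $n$, which suffices) and the fact that the contribution of $\ell$ with $\bar m_\ell<1$ is negligible in $\sum\bar m_\ell\log^{+}\bar m_\ell$.
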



In Sections~\ref{Section:Converse} and \ref{Section:Achievability} we prove Theorem~\ref{thm:capacity}.
To prove the converse to this result, we exploit the fact that, for large $n$, $N_i/\log n$ has an approximately exponential distribution.
This, together with several concentration results, allows us to partition the set of fragments into multiple bins of fragments with roughly the same size and view the torn-paper coding, in essence, as parallel channels with fixed-size fragments.
Our achievability is based on random coding arguments and does not provide much insight into efficient coding schemes.
This opens up interesting avenues for future research.


\section{Converse}
\label{Section:Converse}

In order to prove the converse, we 
first partition the input and output strings based on length.
This allows us to view the torn-paper channel as a set of parallel channels, each of which involves fragments of roughly the same size.
More precisely, for an integer parameter $L$, we will let
\al{
& \X_k = \left\{ \vX_i : \tfrac{k-1}{L} \log n \leq N_i < \tfrac{k}{L} \log n \right\}
\text{ and } \nonumber \\
& \Y_k = \left\{ \vY_i : \tfrac{k-1}{L} \log n \leq N_{\pi_i} < \tfrac{k}{L} \log n \right\},
\label{eq:ykdef}
}
for $k=1,2,...$,
and we will think of the transformation from $\X_k$ to $\Y_k$ as a separate channel.
Notice that the $k$th channel is intuitively similar to the shuffling channel with equal-length pieces considered in \cite{DNAStorageISIT}.

We will use the fact that the number of fragments in $\Y_k$ concentrates as $n \to \infty$.
More precisely, we let
\al{
q_{k,n} = \Pr\left( \frac{k-1}{L} \leq \frac{N_1}{\log n} < \frac{k}{L}  \right),
}
and we have the following lemma, proved in Section~\ref{Section:appendix}.
\begin{lemma} \label{lem:Ykconc}
For any $\ep > 0$ and $n$ large enough,
\al{
\Pr\left( \left| |\Y_k| - n p_n q_{k,n} \right| > \ep n p_n \right)  
\leq 4 e^{-n p_n^2 \ep^2 / 4},
}
\end{lemma}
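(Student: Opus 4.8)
The plan is to reduce the two‑sided deviation of $|\Y_k|$ to the concentration of a single monotone ``fragment‑count'' statistic to which McDiarmid's bounded–differences inequality applies directly. Since the shuffling permutation leaves $|\Y_k|$ unchanged, I work in the equivalent cut model: let $T_2,\dots,T_n$ be i.i.d.\ ${\rm Bernoulli}(p_n)$ cut indicators, which deterministically fix the partition of $\{1,\dots,n\}$ into output fragments, so every quantity below is a function of $(T_2,\dots,T_n)\in\{0,1\}^{n-1}$. For an integer $a\ge 1$ let $\hat G(a)$ be the number of output fragments of length at least $a$, and let $a_k,b_k$ be the integer endpoints of $[\tfrac{k-1}{L}\log n,\tfrac{k}{L}\log n)$, so that, up to the classification of the single last fragment (which changes the count by at most $1$, hence by $o(\ep n p_n)$ since $n p_n\to\infty$), $|\Y_k|=\hat G(a_k)-\hat G(b_k+1)$, and $q_{k,n}=(1-p_n)^{a_k-1}-(1-p_n)^{b_k}$. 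It therefore suffices to prove a concentration bound for each $\hat G(a)$ around $np_n\Pr(N_1\ge a)=np_n(1-p_n)^{a-1}$ and then combine the two via the triangle inequality and a union bound; the factor $4$ in the statement is exactly $2$ (two‑sidedness of McDiarmid) times $2$ (the two terms $\hat G(a_k)$ and $\hat G(b_k+1)$).

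First I would show that $\hat G(a)$ has bounded differences with constant $1$: flipping one $T_j$ either inserts a cut, splitting one fragment into two, or deletes a cut, merging two fragments into one, and in either case the number of fragments of length $\ge a$ changes by at most one (this uses $a\ge 1$; the boundary fragments at positions $1$ and $n$ fall under the same case analysis, and the fragment being split necessarily has length $\ge 2$). McDiarmid's inequality then gives $\Pr\bigl(|\hat G(a)-E\hat G(a)|\ge t\bigr)\le 2\exp\bigl(-2t^2/(n-1)\bigr)$.

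Second I would estimate $E\hat G(a)$. By linearity over fragments and a direct count in the cut model, the expected number of fragments of length exactly $\ell$ equals $p_n(1-p_n)^{\ell-1}\bigl(2+(n-\ell-1)p_n\bigr)$ for $1\le\ell\le n-1$ (first fragment, interior fragments, last fragment) and $(1-p_n)^{n-1}$ for $\ell=n$. Summing over $\ell\ge a$: an upper bound replacing $n-\ell-1$ by $n$ gives $E\hat G(a)\le (np_n+2)(1-p_n)^{a-1}+(1-p_n)^{n-1}$, and a lower bound keeping only $\ell\in[a,a+\sqrt n\,]$ (where $n-\ell-1=n(1-o(1))$ and the truncated geometric tail $1-(1-p_n)^{\sqrt n}=1-o(1)$ regardless of whether $\alpha>0$) gives $E\hat G(a)\ge n(1-o(1))p_n(1-p_n)^{a-1}$; together, using $np_n\to\infty$ and $a=O(\log n)$, this yields $E\hat G(a)=(1+o(1))\,np_n(1-p_n)^{a-1}$ for both values $a_k$ and $b_k+1$. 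For the finitely many $k$ with $\tfrac{k}{L}\log n>n$ the claim is immediate, since then both $|\Y_k|$ and $np_nq_{k,n}$ are $o(np_n)$.

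Finally I would combine the pieces: for $n$ large enough the total deterministic error $\bigl||\Y_k|-(\hat G(a_k)-\hat G(b_k+1))\bigr|+|E\hat G(a_k)-np_n(1-p_n)^{a_k-1}|+|E\hat G(b_k+1)-np_n(1-p_n)^{b_k}|$ is at most $\eta n p_n$ for a small fixed $\eta=\eta(\ep)$, so $\{\,||\Y_k|-np_nq_{k,n}|>\ep np_n\,\}$ forces one of $|\hat G(a_k)-E\hat G(a_k)|$, $|\hat G(b_k+1)-E\hat G(b_k+1)|$ to exceed $\tfrac{\ep-\eta}{2}np_n$; McDiarmid and a union bound then give a bound $4\exp\bigl(-(\ep-\eta)^2 n^2 p_n^2/(2(n-1))\bigr)\le 4\exp(-np_n^2\ep^2/4)$ once $\eta$ is chosen small enough and $n$ is large. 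I expect the main obstacle to be the mean estimate: one must control the geometric tail sum against the $(n-\ell-1)p_n^2$ weights carefully — in particular without assuming $\alpha>0$ — so that the $o(1)$ correction is genuinely uniform over the two values $a_k$ and $b_k+1$ that enter the final bound.
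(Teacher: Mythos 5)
Your argument is correct, and it takes a genuinely different route from the paper's. The paper introduces a proxy $\tilde Z=\sum_{i=1}^{np_n}\1\{(k-1)/L\le N_i/\log n<k/L\}$ built from an \emph{infinite} i.i.d.\ geometric sequence; since $\tilde Z$ is a sum of a deterministic number of i.i.d.\ indicators with $E[\tilde Z]=np_nq_{k,n}$ exactly, Hoeffding applies immediately, and the discrepancy $\left||\Y_k|-\tilde Z\right|\le|K-np_n|$ is then controlled by a second Hoeffding bound on $K=1+\sum_{i=2}^nT_i$, combined via the triangle inequality. You instead apply McDiarmid's bounded-differences inequality directly to the tail count $\hat G(a)$ viewed as a function of the cut indicators $(T_2,\dots,T_n)$, using the clean observation that flipping a single $T_j$ changes $\hat G(a)$ by at most one. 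What your route buys is that the randomness of $K$ is absorbed automatically rather than handled by a separate estimate; what it costs is an explicit computation of $E[\hat G(a)]$, whereas $E[\tilde Z]$ is exact, and that is the one soft spot in your sketch: the truncation to $\ell\in[a,a+\sqrt n]$ needs $p_n\sqrt n\to\infty$, which does not follow from $p_n\log n\to\alpha$ (e.g.\ $p_n=n^{-1/2}$ gives $1-(1-p_n)^{\sqrt n}\to1-e^{-1}$, not $1-o(1)$). The fix is to sum your exact per-length formula in closed form: one gets $E[\hat G(a)]-np_n(1-p_n)^{a-1}=(1-p_n)^{a-1}\bigl(2-(a+1)p_n\bigr)-(1-p_n)^{n-1}\bigl(1+(n-a-1)p_n\bigr)-p_n^2S_3$ with $0\le p_n^2S_3\le(1-p_n)^a$, and each term is $O(1)+o(np_n)=o(np_n)$ under $np_n\to\infty$ and $a=O(\log n)$, with no further assumption on $p_n$; when $np_n\not\to\infty$ the right-hand side of the lemma is vacuous for the paper's proof as well. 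With that repair your proof closes, and the exponential rate matches the paper's, since McDiarmid over $n-1$ Bernoulli coordinates and the paper's Hoeffding bound on $K$ both yield exponents of order $np_n^2\ep^2$, the common bottleneck.
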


Notice that, since $\lim_{n\to\infty} p_n \log n = \alpha$, $E\left[\frac{N_1}{\log n}\right] \to \alpha^{-1}$ as $n \to \infty$.
Moreover, asymptotically, $\frac{N_1}{\log n}$ approaches an ${\rm Exponential}(\alpha)$ distribution.
This known fact is stated as the following lemma, which we also prove in Section~\ref{Section:appendix}.
\begin{lemma} 
If $N^{(n)}$ is a ${\rm Geometric}(p_n)$ random variable and $\lim_{n\to\infty} E[N^{(n)}]/\log n = 1/\alpha$, then
\al{
\lim_{n\to \infty}
\Pr\left( N^{(n)} \geq \beta \log n \right) = e^{-\alpha \beta}.
}
\label{lem:exp_app1}
\end{lemma}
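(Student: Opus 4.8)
The plan is to work directly with the explicit survival function of a geometric random variable and take the limit. Recall that if $N^{(n)} \sim {\rm Geometric}(p_n)$ (supported on $\{1,2,\dots\}$), then $\Pr(N^{(n)} \geq m) = (1-p_n)^m$ for every positive integer $m$. The event $\{N^{(n)} \geq \beta \log n\}$ is the same as $\{N^{(n)} \geq \lceil \beta \log n \rceil\}$, so
\al{
\Pr\left( N^{(n)} \geq \beta \log n \right) = (1-p_n)^{\lceil \beta \log n \rceil}.
\label{eq:survival}
}
Taking natural logarithms, $\ln \Pr( N^{(n)} \geq \beta \log n ) = \lceil \beta \log n \rceil \ln(1-p_n)$.

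The next step is to control the two factors. Since $E[N^{(n)}] = 1/p_n$ and the hypothesis says $E[N^{(n)}]/\log n \to 1/\alpha$, we get $p_n \log n \to \alpha$, and in particular $p_n \to 0$. Using the Taylor expansion $\ln(1-p_n) = -p_n - p_n^2/2 - \cdots = -p_n(1 + o(1))$, and $\lceil \beta \log n \rceil = \beta \log n (1 + o(1))$, the product becomes
\al{
\lceil \beta \log n \rceil \ln(1-p_n) = -\beta \log n \cdot p_n (1+o(1)) \to -\alpha\beta,
}
using $p_n \log n \to \alpha$. Exponentiating gives $\Pr( N^{(n)} \geq \beta \log n ) \to e^{-\alpha\beta}$, which is the claim. (If $\alpha = 0$ the same computation gives the limit $1 = e^{0}$, and if one adopts the convention $\alpha = \infty$ when $p_n \log n \to \infty$, the limit is $0$; but the relevant regime is $\alpha \in (0,\infty)$.)

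**Where the work is.** There is no serious obstacle here — this is a standard "geometric converges to exponential under rescaling" fact. The only things to be careful about are bookkeeping points: (i) fixing the support convention for the geometric distribution (whether it starts at $0$ or $1$) so that the $(1-p_n)^m$ formula is exact, noting that a shift by one is absorbed into the $o(1)$ terms either way; (ii) handling the ceiling in $\lceil \beta \log n \rceil$, which only changes the exponent by at most $1$ and hence contributes a factor $(1-p_n)^{O(1)} \to 1$; and (iii) justifying that the $p_n^2/2 + \cdots$ correction in $\ln(1-p_n)$ is negligible after multiplying by $\beta \log n$, i.e. $\beta \log n \cdot p_n^2 = (p_n \log n)\cdot \beta p_n \to \alpha \cdot 0 = 0$. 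Assembling these estimates into a clean $\epsilon$-argument, or simply passing to the limit in the product form as above, completes the proof.
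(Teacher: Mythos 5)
Your proof is correct and takes essentially the same approach as the paper: both compute the limit of the geometric survival function $(1-p_n)^{\beta\log n}$ and show it converges to $e^{-\alpha\beta}$ using $p_n\log n\to\alpha$. You handle it via logarithms and a Taylor expansion of $\ln(1-p_n)$, while the paper rewrites the exponent as $E[N^{(n)}]\cdot(\beta\log n/E[N^{(n)}])$ and invokes $(1-1/m)^m\to e^{-1}$; these are the same computation in two dialects, and your treatment of the ceiling is a touch more careful than the paper's, which silently treats $\beta\log n$ as an integer exponent.
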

Lemma~\ref{lem:Ykconc} implies that $E[|\Y_k|] = np_nq_{k,n} + o(np_n)$, and 
\al{
\lim_{n\to \infty} \frac{E\left[|\Y_k|\right]}{np_n} 
& = \lim_{n\to \infty} \frac{ np_nq_{k,n} + o(np_n)}{np_n} \nonumber \\
& = \lim_{n\to\infty} \Pr\left( \tfrac{k-1}{L} \leq \tfrac{N_1}{\log n} < \tfrac{k}{L}  \right) \nonumber \\
& = e^{-\alpha (k-1)/L} - e^{-\alpha k/L},
}
where the last equality follows from Lemma~\ref{lem:exp_app1}.
Next, we define event 
$\E_{k,n} = \{ \left| |\Y_k| - np_nq_{k,n} \right| > \ep_n n p_n \}$, where $\ep_n = 1/\log(n)$, which guarantees that, as $n\to \infty$, $\ep_n \to 0$ and $\Pr(\E_{k,n}) \to 0$ from Lemma~\ref{lem:Ykconc}.
Then,
\al{
 H(\Y_k) & \leq H(\Y_k,\1_{\E_{k,n}}) \leq 1 + H(\Y_k | \1_{\E_{k,n}} ) \nonumber \\
 & \leq 1 + 2n \Pr(\E_{k,n}) + H(\Y_k | \bar \E_{k,n} ),
 \label{eq:bound1}
}
where we loosely upper bound $H(\Y_k | \E_k)$ with $2n$, since $\Y$ can be fully described by the binary string $X^n$ and the $n-1$ tearing points indicators $T_2,...,T_n$. 

In order to bound $H(\Y_k | \bar \E_{k,n} )$, \emph{i.e.}, the entropy of $\Y_k$ given that its size is close to $np_nq_{k,n}$, we first note that the number of possible distinct sequences in $\Y_k$ is 
\aln{
\sum_{i=\frac{k-1}{L}\log n}^{\frac{k}{L}\log n} 2^i < 2 \cdot 2^{\frac{k}{L}\log n} = 2n^{k/L}.
}
Moreover, given $\bar \E_k$, 
\al{
 |\Y_k| & \leq np_nq_{k,n} + \ep n p_n  \nonumber \\
 & = np_n \left[ \ep + \Pr\left( \frac{k-1}{L} \leq \frac{N_1}{\log n} < \frac{k}{L}  \right) \right]
 \triangleq M,
 \label{eq:M}
}
and the set $\Y_k$ can be seen as a histogram $(x_1,...,x_{2n^{k/L}})$ over all possible $2n^{k/L}$  strings with $\sum x_i = M$.
Notice that we can view the last element of the histogram as containing ``excess counts'' if $|\Y_k| < M$.
Hence, from Lemma 1 in \cite{DNAStorageISIT},
\al{
 H(\Y_k | \bar \E_{k,n} ) & \leq \log {{2n^{k/L} + M -1}\choose{M}} \nonumber \\ 
 & \leq M \log \left(\frac{e(2n^{k/L}+M-1)}{M} \right) \nonumber \\ 
 & = M \left[ \log \left(2n^{k/L}+M-1\right) + \log(e) - \log M \right] \nonumber \\ 
& = M \left[ \max(\tfrac{k}{L}\log n,\log M) - \log M + o(\log n) \right] \nonumber \\
& = M \left[ (\tfrac{k}{L}\log n - \log M)^+ + o(\log n) \right] \nonumber \\
& = M \log n \left[ 
 (\tfrac{k}{L}  - \log M/\log n)^+ + o(1)
\right].
\label{eq:bound2}
}
From (\ref{eq:M}), we have
${\log M}/{\log n} \to 1$ as $n\to \infty$.
Combining (\ref{eq:bound1}) and (\ref{eq:bound2}), dividing by $n$, and letting $n \to \infty$ yields
\al{
\lim_{n\to \infty} & \frac{H(\Y_k)}{n}
= \lim_{n \to \infty} \frac{H(\Y_k|\bar \E_{k,n}) + 1 + 2n \Pr(\E_{k,n})}{n}
\nonumber \\
& \leq \lim_{n \to \infty} \frac{M \log n}{n} \left(\frac{k}{L} - 1\right)^+ \nonumber \\
& = \lim_{n \to \infty} p_n \log n \, ( q_{k,n} + \ep_n  ) \left(\frac{k}{L} - 1\right)^+
\nonumber \\
& = \alpha \left( e^{-\alpha (k-1)/L} - e^{-\alpha k/L} \right) \left(\frac{k}{L} - 1\right)^+. 
\label{eq:limyk}
}
In order to bound an achievable rate $R$, we use Fano's inequality to obtain
\al{
n R & \leq I(X^n;\Y) + o(n) \leq H(\Y) + o(n), 
\label{eq:fano}
}
and we conclude that any achievable rate must satisfy
$
R \leq \lim_{n\to \infty} \frac{H(\Y)}{n}.
$
In order to connect (\ref{eq:fano}) and (\ref{eq:limyk}), we state the following lemma,
which allows us to move the limit inside the summation.
The proof is in Section~\ref{Section:appendix}.
\begin{lemma} \label{lem:entropy}
If $\Y_k$ is defined as in (\ref{eq:ykdef}) for $k=1,2,...$,
\aln{
\lim_{n\to \infty} \frac{H(\Y)}{n} \leq 
\sum_{k=1}^\infty \lim_{n\to \infty}  \frac{H(\Y_k)}{n}.
}
\end{lemma}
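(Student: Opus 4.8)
The plan is to establish the inequality $\lim_{n\to\infty} \frac{H(\Y)}{n} \leq \sum_{k=1}^\infty \lim_{n\to\infty} \frac{H(\Y_k)}{n}$ in two stages: first, bound $H(\Y)$ by a sum of the $H(\Y_k)$; second, justify interchanging the limit in $n$ with the (now infinite) sum over $k$. For the first stage, note that the output set $\Y$ is determined by the collection $\{\Y_k\}_{k=1}^\infty$ together with the interleaving information needed to merge these sub-multisets back into $\Y$. However, $\Y$ is \emph{unordered}, so knowing all the $\Y_k$ as multisets actually determines $\Y$ exactly (the $\X_k$ partition the fragments by length, and $\Y$ is just the union of the $\Y_k$ with no ordering to recover). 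Hence $H(\Y) \leq \sum_k H(\Y_k)$ by subadditivity of entropy — but this is an infinite sum, and only finitely many $\Y_k$ are nonempty for any given realization (since $\sum_i N_i$ is close to $n$, the longest fragment has length $O(\log^2 n)$ with high probability, so $k$ effectively ranges up to $O(L \log n)$). I would therefore first truncate: fix a large integer $K_0$, write $\Y = (\Y_1,\ldots,\Y_{K_0}, \Y_{>K_0})$ where $\Y_{>K_0}$ collects all fragments of length $\geq \frac{K_0}{L}\log n$, and use $H(\Y) \leq \sum_{k=1}^{K_0} H(\Y_k) + H(\Y_{>K_0})$.

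The second stage is the delicate part. Dividing by $n$ and taking $n\to\infty$, I get $\lim_n \frac{H(\Y)}{n} \leq \sum_{k=1}^{K_0} \lim_n \frac{H(\Y_k)}{n} + \limsup_n \frac{H(\Y_{>K_0})}{n}$. The goal is to show the tail term $\limsup_n \frac{H(\Y_{>K_0})}{n}$ vanishes as $K_0\to\infty$. For this I would reuse the histogram/counting bound from the converse argument applied to fragments of length $\geq \frac{K_0}{L}\log n$: the expected number of such fragments is $\approx n p_n e^{-\alpha K_0/L}$ by Lemma~\ref{lem:exp_app1}, which is a vanishing fraction of $np_n$ as $K_0$ grows; and each such fragment, even generously, contributes at most $O(\log^2 n)$ bits (or one can bound the total entropy of $\Y_{>K_0}$ by the total number of bits lying in long fragments, which is $\sum_i N_i \1\{N_i \geq \frac{K_0}{L}\log n\}$). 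A concentration argument — either a variant of Lemma~\ref{lem:Ykconc} or a direct second-moment/Markov bound on $\sum_i N_i \1\{N_i \geq \frac{K_0}{L}\log n\}$ — shows $\frac{1}{n}H(\Y_{>K_0}) \leq$ (something going to $0$ as $K_0 \to \infty$, uniformly in large $n$). Then letting $K_0 \to \infty$ gives $\lim_n \frac{H(\Y)}{n} \leq \sum_{k=1}^\infty \lim_n \frac{H(\Y_k)}{n}$.

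I expect the main obstacle to be controlling the tail contribution $H(\Y_{>K_0})$ rigorously and uniformly in $n$: one needs both that the number of long fragments concentrates around a small multiple of $np_n$ and that the per-fragment entropy cost is not too large, and these have to be combined carefully since the "number of distinct strings of length $\ell$" grows like $2^\ell$, so a long fragment is expensive in raw bits — the saving comes entirely from there being exponentially (in $K_0$) few of them. A clean way around the per-fragment blowup is to observe that $H(\Y_{>K_0})$ is at most the entropy of the \emph{ordered} list of long fragments together with the tearing pattern restricted to long fragments, which is in turn at most the number of bits contained in long fragments plus a lower-order term; then Markov's inequality on $E\big[\sum_i N_i \1\{N_i \geq \tfrac{K_0}{L}\log n\}\big] = n p_n \cdot E\big[N_1 \1\{N_1 \geq \tfrac{K_0}{L}\log n\}\big]/E[N_1] \cdot (1+o(1))$, combined with the exponential-tail computation $E[N_1 \1\{N_1 \geq \beta \log n\}] \sim \beta \log n \cdot e^{-\alpha\beta}/\alpha \cdot$(const), shows this is $o(n)$ once $K_0$ is large. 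A secondary technical point is making sure the finitely-many limits $\lim_n \frac{H(\Y_k)}{n}$ all exist (they do, by (\ref{eq:limyk})) so that the truncated interchange is legitimate; this is immediate from the converse computation already carried out.
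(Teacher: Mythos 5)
Your proposal is correct and follows essentially the same route as the paper: truncate at a finite index ($K_0$ in your notation, $A$ in the paper's), use subadditivity for the first $K_0$ bins, and bound the tail term $H(\Y_{>K_0})$ by the coverage, i.e., the number of bits lying inside long fragments, showing via a second-moment concentration bound on $\frac{1}{n}\sum_i N_i \1\{N_i \geq \frac{K_0}{L}\log n\}$ (this is exactly the paper's $c_{A/L}$ and Lemma~\ref{lem:covconc}, which uses Chebyshev as you anticipated) that this quantity is close to $(\alpha K_0/L + 1)e^{-\alpha K_0/L}$, which vanishes as $K_0 \to \infty$. You correctly flagged that the naive histogram/counting bound applied to the unbounded tail would be awkward because of the exponential number of long strings, and correctly identified that bounding by raw bit content of long fragments sidesteps this; that is precisely the paper's maneuver.
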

Using this lemma and (\ref{eq:limyk}), 
we can upper bound any achievable rate as
\al{
R & \leq  \lim_{n\to\infty}   \frac{H(\Y)}{n} 
\leq \sum_{k=1}^\infty \lim_{n\to \infty}  \frac{H(\Y_k)}{n}
\nonumber \\
&
= \sum_{k=L+1}^\infty \alpha \left( e^{-\alpha (k-1)/L} - e^{-\alpha k/L} \right) (\tfrac{k}{L} - 1) \nonumber \\
& = \frac{\alpha}{L} \sum_{k=L+1}^\infty  k \left( e^{-\alpha (k-1)/L} - e^{-\alpha k/L} \right) \nonumber \\
& \quad - \alpha \sum_{k=L+1}^\infty  \left( e^{-\alpha (k-1)/L} - e^{-\alpha k/L} \right) \nonumber \\
& = \frac{\alpha}{L} \sum_{k=L+1}^\infty  k \left( e^{-\alpha (k-1)/L} - e^{-\alpha k/L} \right) - \alpha e^{-\alpha},
}
where the last equality is due to a telescoping sum.
The remaining summation can be computed as
\al{
\sum_{k=L+1}^\infty &  k \left( e^{-\alpha (k-1)/L} - e^{-\alpha k/L} \right) \nonumber \\
& = (L+1)e^{-\alpha} + \sum_{k=L+2}^\infty e^{-\alpha (k-1)/L} \nonumber \\
& = L e^{-\alpha} + e^{-\alpha} \sum_{k=0}^\infty e^{-\alpha k/L} 
= L e^{-\alpha} + \frac{e^{-\alpha}}{1-e^{-\alpha/L}}. 
\nonumber
}
We conclude that any achievable rate must satisfy
\al{
R & < \frac{\alpha}{L}  \left(Le^{-\alpha} + \frac{e^{-\alpha}}{1-e^{-\alpha/L}} \right) - \alpha e^{-\alpha} 
= \frac{\alpha e^{-\alpha }}{L(1-e^{-\alpha/L})},
\nonumber 
}
for any positive integer $L$.
Since 
\aln{
\lim_{L \to \infty} L(1-e^{-\alpha/L}) = \alpha,
}
we obtain the outer bound $R < e^{-\alpha}$.

\section{Achievability via Random Coding}
\label{Section:Achievability}



A random coding argument can be used to show that any rate $R < e^{-\alpha}$
is achievable.
Consider generating a codebook $\C$ with $2^{nR}$ codewords, by independently picking each symbol as ${\rm Bernoulli}(1/2)$.
Let $\C = \{\x_1,...,\x_{2^{nR}}\}$, where $\x_i$ is the random codeword associated with message $W = i$.
Notice that optimal decoding can be obtained by simply finding an index $i$ such that $\x_i$ corresponds to a concatenation of the strings in $\Y$.
If more than one such codewords exist, an error is declared.

Suppose message $W=1$ is chosen and $\Y = \{\vY_1,...,\vY_K\}$ is the random set of output strings.
To bound the error probability we consider a suboptimal decoder that throws out all fragments shorter than $\gamma \log n$, for some $\gamma > 0$ to be determined, and simply tries to find a codeword $\x_i$ that contains all output strings $\Y_\gamma = \{\vY_i : N_{\pi_i} \geq \gamma \log n\}$ as non-overlapping substrings.
If we let $\E$ be the error event averaged over all codebook choices, we have
\aln{
& \Pr(\E) = \Pr(\E | W = 1) \\
& = \Pr\left( \text{some $\x_j,~j\neq1,$ contains all strings in } \Y_\gamma | W = 1\right).
}

Using a similar approach to the one used in Section~\ref{Section:Converse}, it can be shown that
 $E[|\Y_\gamma|] = np_n \Pr(N_1 \geq \gamma \log n ) + o(np_n)$.
From Lemma~\ref{lem:exp_app1}, we thus have
\al{
\lim_{n \to \infty} \frac{E[|\Y_\gamma|]}{n\cdot p_n} =
\lim_{n\to\infty} \Pr\left( N_1 \geq \gamma \log n \right) = e^{-\alpha \gamma}.
\label{eq:gamma}
}
If we let $Z_i$ be the binary indicator of the event $\{  N_i \geq \gamma \log n \}$, then
$|\Y_\gamma| = \sum_{i=1}^K Z_i$.
In Section~\ref{Section:appendix}, 
we prove the following concentration result.
\begin{lemma}\label{lem:gammaconc}
For any $\ep > 0$, as $n \to \infty$,
\al{
\Pr \left( ||\Y_\gamma| - e^{-\alpha \gamma}np_n| > \ep n p_n \right)  \to 0.
\label{eq:gammaconc}
}
\end{lemma}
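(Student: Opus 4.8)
The plan is to first pass from the random number of fragments to a cleaner object. Since the count of long output fragments is unaffected by the permutation $\pi$, I would write $|\Y_\gamma| = \sum_{j=1}^{n} U_j$, where $\ell := \lceil \gamma\log n\rceil$ and $U_j$ is the indicator that position $j$ begins an output fragment of length at least $\gamma\log n$ — equivalently, that $j=1$ or $T_j = 1$, that $T_{j+1} = \cdots = T_{j+\ell-1} = 0$, and that $j + \ell - 1 \le n$. This exhibits $|\Y_\gamma|$ as a function of the i.i.d.\ ${\rm Bernoulli}(p_n)$ cut indicators $T_2,\dots,T_n$ of Section~\ref{Section:Problem}. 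I would then take as given (from the computation around~(\ref{eq:gamma}) together with Lemma~\ref{lem:exp_app1}) that $E[|\Y_\gamma|] = e^{-\alpha\gamma}np_n + o(np_n)$, so that it suffices to show $\Pr\big(\big||\Y_\gamma| - E|\Y_\gamma|\big| > \tfrac{\ep}{2}np_n\big) \to 0$ and then combine this with the triangle inequality and $\big|E|\Y_\gamma| - e^{-\alpha\gamma}np_n\big| \le \tfrac{\ep}{2} np_n$ (valid for $n$ large) to obtain~(\ref{eq:gammaconc}).

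The crux of the argument is a variance bound exploiting that the $U_j$ are pairwise negatively correlated. When $|j-j'| \ge \ell$, the variables $U_j$ and $U_{j'}$ are determined by disjoint blocks of the $T_i$'s, hence independent. When $0 < |j-j'| < \ell$, say $j < j'$, the event $U_j = 1$ forces $T_{j'} = 0$ while $U_{j'} = 1$ forces $T_{j'} = 1$ (since $j' \ge 2$), so $U_j U_{j'} \equiv 0$ and ${\rm Cov}(U_j,U_{j'}) = -E[U_j]E[U_{j'}] \le 0$. Therefore
\aln{
{\rm Var}\big(|\Y_\gamma|\big) \;\le\; \sum_{j=1}^n {\rm Var}(U_j) \;\le\; \sum_{j=1}^n E[U_j] \;=\; E\big[|\Y_\gamma|\big] \;=\; \Theta(np_n).
}
Chebyshev's inequality then gives $\Pr\big(\big||\Y_\gamma| - E|\Y_\gamma|\big| > \tfrac{\ep}{2} np_n\big) \le 4\,E[|\Y_\gamma|] / (\ep^2 n^2 p_n^2) = O\big(1/(\ep^2 np_n)\big)$, which tends to $0$ because $np_n \to \infty$ (automatic when $\alpha > 0$, since then $p_n \sim \alpha/\log n$), completing the proof via the first paragraph.

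I expect the only real obstacle to be the dependence among fragments — they share underlying bits — and the negative-correlation observation above is precisely what I would use to sidestep it; the remainder is a routine second-moment estimate. If one prefers exponential tails to the polynomial Chebyshev bound, I would instead note that $|\Y_\gamma|$ is $1$-Lipschitz in $(T_2,\dots,T_n)$ under the Hamming metric — flipping one $T_i$ inserts or deletes a single cut, which splits or merges a single fragment, and since a fragment is long as soon as one of its pieces is, the count changes by at most $1$ — so McDiarmid's bounded-differences inequality yields $\Pr\big(\big||\Y_\gamma| - E|\Y_\gamma|\big| \ge t\big) \le 2e^{-2t^2/(n-1)}$, which with $t = \tfrac{\ep}{2} np_n$ decays whenever $np_n^2 \to \infty$ (in particular for every $\alpha \in (0,\infty)$).
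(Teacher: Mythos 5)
Your proposal is correct, and it takes a genuinely different route from the paper's proof.

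The paper argues as follows: it extends $N_1,N_2,\dots$ to an infinite i.i.d.\ ${\rm Geometric}(p_n)$ sequence, introduces the fixed-length comparison sum $\tilde Z = \sum_{i=1}^{np_n} Z_i$ with $Z_i = \1\{N_i \ge \gamma\log n\}$, bounds $\big||\Y_\gamma|-\tilde Z\big| \le |K-np_n|$ (so that the random stopping index $K$ is decoupled from the i.i.d.\ sum), and then applies Hoeffding's inequality twice — once to $\tilde Z$ and once to $K=1+\sum_{i=2}^n T_i$ — combining via the triangle inequality to obtain the explicit tail bound $4e^{-np_n^2\ep^2/9}$. Your proof skips the artificial infinite extension and the intermediate object $\tilde Z$ entirely: you write $|\Y_\gamma|$ directly as $\sum_j U_j$ in terms of the cut indicators $T_2,\dots,T_n$, observe that the $U_j$ are pairwise independent at distance $\ge \ell$ and orthogonal (hence negatively correlated) at distance $< \ell$, so ${\rm Var}(|\Y_\gamma|)\le E[|\Y_\gamma|]$, and then close with Chebyshev. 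This buys you a cleaner, single-shot argument at the cost of a polynomial rather than exponential tail, which is still ample since the lemma only asserts convergence to $0$; your McDiarmid alternative recovers an exponential tail $\exp(-\Theta(np_n^2))$ of the same strength as the paper's, via the simple observation that flipping one $T_i$ splits or merges a single fragment and so changes the long-fragment count by at most one. One small caveat worth flagging: the paper's $\Y_\gamma$ is defined in terms of the underlying geometric variables $N_{\pi_i}$, whereas your $U_j$-representation counts observed fragment lengths (via the requirement $j+\ell-1\le n$ and no cuts in between). These can disagree only on the final fragment, where $N_K$ may exceed the truncated length, so your count differs from $|\Y_\gamma|$ by at most one — negligible against the $\ep np_n$ scale — but a sentence acknowledging this boundary effect would make the argument airtight. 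Finally, note that both your Chebyshev/McDiarmid bounds and the paper's require $np_n\to\infty$ (resp.\ $np_n^2\to\infty$), which holds automatically for $\alpha>0$; neither proof addresses degenerate sequences $p_n$ with $\alpha=0$, and that is consistent with the paper's implicit scope.
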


In addition to characterizing $|\Y_\gamma|$ asymptotically, we will also be interested in the total length of the sequences in $\Y_\gamma$.
Intuitively, this determines how well the fragments in $\Y_{\gamma}$ \emph{cover} their codeword of origin $\x_1$.
\begin{definition} \label{def:cov}
The coverage of $\Y_\gamma$ is defined as 
\al{
c_\gamma = \frac{1}{n}\sum_{i=1}^K N_i \1_{\{N_i \geq \gamma \log n\}.
}
\label{eq:covdef}}
Notice that $0 \leq c_\gamma \leq 1$ with probability $1$.
\end{definition}
In order to characterize $c_\gamma$ asymptotically, we will again resort to the  exponential approximation to a geometric distribution, through the following lemma.
\begin{lemma} \label{lem:exp2}
If $N^{(n)}$ is a ${\rm Geometric}(p_n)$ random variable and $\lim_{n\to\infty} E[N^{(n)}]/\log n = 1/\alpha$, then, for any $\beta \geq 0$,
\al{
\lim_{n \to \infty} & E\left[ N^{(n)} \1_{\{ N^{(n)} \geq \gamma \log n  \}}\right] / \log n \nonumber \\
& = E\left[ \tilde N \1_{\{ \tilde N \geq \gamma   \}}\right]
= \left(\gamma + \frac1\alpha\right)e^{-\alpha \gamma},
}
where $\tilde N$ is an ${\rm Exponential}(\alpha)$ random variable.
\end{lemma}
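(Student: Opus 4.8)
The plan is to reduce the truncated expectation of the geometric variable to an exact closed form, and then pass to the limit by re‑using the exponential approximation already established in Lemma~\ref{lem:exp_app1}.

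First I would record the right‑hand side. For $\tilde N\sim{\rm Exponential}(\alpha)$, integration by parts gives $\int_\gamma^\infty t\,\alpha e^{-\alpha t}\,dt = \gamma e^{-\alpha\gamma} + \int_\gamma^\infty e^{-\alpha t}\,dt = \left(\gamma+\tfrac1\alpha\right)e^{-\alpha\gamma}$, which is the claimed value; so it suffices to show the left‑hand side converges to the same quantity.

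Next, since $N^{(n)}$ is supported on $\{1,2,\dots\}$, setting $m_0 = \lceil \gamma\log n\rceil$ we have $\mathbf{1}\{N^{(n)}\ge \gamma\log n\} = \mathbf{1}\{N^{(n)}\ge m_0\}$. Using the tail‑sum identity $E[N\,\mathbf{1}\{N\ge m_0\}] = (m_0-1)\Pr(N\ge m_0) + \sum_{k\ge m_0}\Pr(N\ge k)$ together with the geometric tail $\Pr(N^{(n)}\ge k) = (1-p_n)^{k-1}$ and the geometric series $\sum_{k\ge m_0}(1-p_n)^{k-1} = (1-p_n)^{m_0-1}/p_n$, I obtain the exact expression
\[
E\!\left[N^{(n)}\mathbf{1}\{N^{(n)}\ge \gamma\log n\}\right] = \left(m_0-1+\tfrac1{p_n}\right)(1-p_n)^{m_0-1}.
\]

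Finally I would divide by $\log n$ and let $n\to\infty$. The hypothesis $E[N^{(n)}]/\log n = 1/(p_n\log n)\to 1/\alpha$ gives $\tfrac1{p_n\log n}\to\tfrac1\alpha$ and $p_n\to 0$; combined with $m_0/\log n\to\gamma$ this yields $\tfrac1{\log n}\left(m_0-1+\tfrac1{p_n}\right)\to \gamma+\tfrac1\alpha$. For the remaining factor, $(1-p_n)^{m_0-1} = \Pr(N^{(n)}\ge m_0) = \Pr(N^{(n)}\ge \gamma\log n)\to e^{-\alpha\gamma}$ by Lemma~\ref{lem:exp_app1} (with $\beta=\gamma$). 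Taking the product of the two limits gives $\left(\gamma+\tfrac1\alpha\right)e^{-\alpha\gamma}$, matching the right‑hand side. The only point requiring a little care is the convergence $(1-p_n)^{m_0-1}\to e^{-\alpha\gamma}$, i.e.\ controlling $(m_0-1)\ln(1-p_n)$ when $p_n\to 0$ and $m_0\sim\gamma\log n$; but this is exactly the content of Lemma~\ref{lem:exp_app1}, so no new estimate is needed and the remainder of the argument is essentially bookkeeping.
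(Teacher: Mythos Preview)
Your argument is correct and matches the paper's proof almost line for line: both factor the truncated expectation as $(m_0+O(1)+1/p_n)\cdot\Pr(N^{(n)}\ge m_0)$ and then pass to the limit using $1/(p_n\log n)\to 1/\alpha$ together with Lemma~\ref{lem:exp_app1}. The only cosmetic difference is that the paper obtains this factorization in one line via the memoryless property, writing $E[N^{(n)}\mid N^{(n)}\ge m_0]=\lceil\gamma\log n\rceil+E[N^{(n)}]$, whereas you reach the same closed form through the tail-sum identity; your extra verification of the exponential integral on the right-hand side is also fine but not present in the paper.
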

Using Lemma~\ref{lem:exp2}, we can characterize the asymptotic value of $E[c_\gamma]$ and show that $c_\gamma$ concentrates around this value.
More precisely, 
we show the following lemma in Section~\ref{Section:appendix}.
\begin{lemma} \label{lem:covconc}
For any $\ep > 0$, as $n\to \infty$,
\al{  \label{eq:covconc}
\Pr & \left( \left| c_\gamma - (\alpha \gamma + 1) e^{-\alpha \gamma} \right| > \ep  \right)
\to 0.
}
\end{lemma}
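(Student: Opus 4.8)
The plan is to establish Lemma~\ref{lem:covconc} by combining a first-moment computation (to identify the limiting value $(\alpha\gamma+1)e^{-\alpha\gamma}$) with a concentration argument controlling the fluctuations of $c_\gamma = \frac1n \sum_{i=1}^K N_i \1_{\{N_i \geq \gamma\log n\}}$ around its mean.

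\medskip

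\noindent\textbf{Step 1: Identify the limiting mean.} First I would show that $E[c_\gamma] \to (\alpha\gamma+1)e^{-\alpha\gamma}$. Write $n c_\gamma = \sum_{i=1}^K N_i \1_{\{N_i \geq \gamma\log n\}}$. The mild nuisance is that $K$ is a stopping time correlated with the $N_i$'s, but, exactly as in the arguments referenced earlier in the converse and around \eqref{eq:gamma}, this can be handled: extend the sum to $i=1,\dots,K$ versus the renewal-theory heuristic that there are $\approx np_n$ fragments, and bound the boundary effects (the single fragment straddling position $n$, and the discrepancy between $K$ and its mean) by $o(n)$. Using Wald-type reasoning, $E[n c_\gamma] = E[K]\cdot E[N_1 \1_{\{N_1 \geq \gamma\log n\}}] + o(n)$, and since $E[K] = np_n + (1-p_n)$ and $p_n \log n \to \alpha$, Lemma~\ref{lem:exp2} gives
\al{
\frac{E[n c_\gamma]}{n} = p_n \cdot \frac{E[N_1 \1_{\{N_1 \geq \gamma\log n\}}]}{\log n}\cdot \frac{\log n \cdot E[K]}{np_n} + o(1) \to \alpha \cdot \left(\gamma + \tfrac1\alpha\right)e^{-\alpha\gamma} = (\alpha\gamma+1)e^{-\alpha\gamma}.
}

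\medskip

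\noindent\textbf{Step 2: Concentration.} Next I would show $c_\gamma$ concentrates around its mean. The cleanest route is to work with the i.i.d.\ Bernoulli cut indicators $T_2,\dots,T_n$, so that $c_\gamma$ is a function of these $n-1$ independent bits; flipping one $T_i$ changes at most two fragments, and the total length of any two fragments is bounded — but not by a constant, only by roughly the fragment lengths themselves, which are typically $\Theta(\log n)$ but have a geometric tail up to $n$. So a naive bounded-differences (McDiarmid) bound gives a Lipschitz constant of order $n/n = O(1/n)$ per coordinate only after dividing by $n$; more carefully, flipping $T_i$ alters $n c_\gamma$ by at most the combined length of the two affected fragments, which is $O(\log n)$ with overwhelming probability. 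I would therefore either (a) truncate the fragment lengths at, say, $2\log^2 n / \alpha$, absorb the atypical long fragments into a vanishing-probability error event (their contribution to $c_\gamma$ is $o(1)$ by the geometric tail and there are $O(n)$ of them), and apply McDiarmid to the truncated functional with per-coordinate bound $O(\log^2 n)$, yielding deviation probability $\leq 2\exp(-\Omega(\ep^2 n^2 / (n \log^4 n))) = 2\exp(-\Omega(\ep^2 n/\log^4 n)) \to 0$; or (b) mimic the proof of Lemma~\ref{lem:gammaconc} / Lemma~\ref{lem:Ykconc} directly, decomposing $n c_\gamma$ as a sum over renewal epochs and applying a Chernoff bound to the i.i.d.\ increments together with the concentration of $K$. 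Either way the deviation probability vanishes.

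\medskip

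\noindent\textbf{Step 3: Combine.} Finally, combining Steps 1 and 2 via the triangle inequality $|c_\gamma - (\alpha\gamma+1)e^{-\alpha\gamma}| \leq |c_\gamma - E[c_\gamma]| + |E[c_\gamma] - (\alpha\gamma+1)e^{-\alpha\gamma}|$, the second term is $o(1)$ by Step 1 and the first is small with high probability by Step 2, giving $\Pr(|c_\gamma - (\alpha\gamma+1)e^{-\alpha\gamma}| > \ep) \to 0$ for every $\ep > 0$.

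\medskip

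\noindent\textbf{Main obstacle.} The main obstacle is the concentration step, specifically handling the heavy (geometric, up to length $n$) upper tail of the fragment lengths: a single atypically long fragment can contribute $\Theta(1)$ to $c_\gamma$, so one cannot apply bounded differences blindly. The resolution — truncating lengths at a slowly growing threshold and showing the truncated part carries negligible mass — is routine but is the one place where care is genuinely required; everything else follows the template already set by Lemmas~\ref{lem:Ykconc}, \ref{lem:exp_app1}, \ref{lem:gammaconc}, and \ref{lem:exp2}.
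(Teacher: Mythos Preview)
Your outline is correct and would work, but the paper's argument is considerably simpler on the concentration step. The paper follows essentially your route (b): it extends $N_1,N_2,\dots$ to an infinite i.i.d.\ sequence, sets $Z_i = N_i\1_{\{N_i\geq\gamma\log n\}}$ and $\tilde Z=\sum_{i=1}^{np_n}Z_i$, and controls the gap $|nc_\gamma-\tilde Z|\leq\bigl|\sum_{i=1}^{np_n}N_i-n\bigr|$ via a Chernoff bound for sums of geometrics. The key difference is how the fluctuation of $\tilde Z$ itself is handled. Rather than truncating and invoking McDiarmid, or computing the rate function of $Z_1$, the paper simply applies Chebyshev's inequality together with the crude second-moment bound ${\rm Var}(Z_1)\leq E[N_1^2]=(2-p_n)/p_n^2$; this yields $\Pr(|\tilde Z-E[\tilde Z]|>\delta n)\leq (2-p_n)/(\delta^2 n p_n^2)\to 0$, which already suffices since only $o(1)$ is needed. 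In short, the ``main obstacle'' you flag --- the heavy geometric tail --- is dispatched by a one-line second-moment calculation, so the truncation machinery you propose, while valid, is unnecessary. Your approach would buy an exponentially (rather than polynomially) small deviation probability, which the application does not require.
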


In particular, Lemma~\ref{lem:covconc} implies that
\al{
\lim_{n\to\infty} E[c_\gamma]
= (\alpha \gamma + 1) e^{-\alpha \gamma},
\label{eq:coverage}
}
and that $c_\gamma$ cannot deviate much from this value with high probability.
%
%
%
%
%
If we let $B_1 = (1+\ep) e^{-\alpha \gamma} n p_n$ and 
$B_2 = (1-\ep) (\alpha \gamma + 1)e^{-\alpha \gamma}$, and we define
the event 
\al{
\B & = \{ |\Y_\gamma|  >  B_1 \} 
\cup 
\{ c_\gamma < B_2 \},
}
then (\ref{eq:gammaconc}) and 
(\ref{eq:covconc}) imply that 
$\Pr(\B) \to 0$ as $n \to \infty$.
Since $\B$ is independent of $\{W=1\}$,
we can upper bound the probability of error as
\aln{
\Pr(\E) & \leq \Pr\left( \text{some $\x_j$ contains all strings in } \Y_\gamma | W = 1\right) \\
& \leq \Pr\left( \text{some $\x_j$ contains all strings in } \Y_\gamma | \bar \B, W = 1\right) \nonumber \\ 
& \qquad + \Pr(\B) \\
& \leqnum |\C| \frac{n^{B_1}}{2^{nB_2}} + \Pr(\B) \nonumber \\
& \leq 2^{nR} \, {2^{B_1 \log n}} \, 2^{-nB_2} + o(1) \nonumber \\
& = 2^{nR} \, 2^{(1+\ep) e^{-\alpha \gamma} n p_n \log n - n(1-\ep)(\alpha \gamma + 1) e^{-\alpha \gamma}} + o(1) \nonumber \\
& = 2^{-n((1-\ep) (\alpha \gamma + 1) e^{-\alpha \gamma} -(1+\ep) e^{-\alpha \gamma} p_n \log n -R)} + o(1).
\rescnt
}
Inequality \cnt follows from the union bound and from the fact that thre are at most $n^{B_1}$ ways to align the strings in $\Y_\gamma$ to a codeword $\x_j$ in a non-overlapping way and, given this alignment, $2^{nB_2}$ bits in $\x_j$ must be specified.
Since $p_n \log n \to \alpha$ as $n\to \infty$,
we see that we can a rate $R$ as long as
\aln{
R < (1-\ep)(1+ \alpha \gamma) e^{-\alpha \gamma} - (1+\ep)\alpha e^{-\alpha \gamma},
}
for some $\ep > 0$ and $\gamma> 0$.
Letting $\ep \to 0$,
yields 
\aln{
R < (1+ \alpha \gamma - \alpha)e^{-\alpha \gamma}
}
for some $\gamma > 0$.
The right-hand side is maximized by setting $\gamma = 1$, which implies that we can achieve any rate $R < e^{-\alpha}$.



\section{Proofs of Lemmas}
\label{Section:appendix}

\begin{lemmarep}
The number of fragments in $\Y_k$ satisfies
\aln{
\Pr\left( \left| |\Y_k| - np_nq_{k,n} \right| > \ep n p_n \right) \leq 
4 e^{-n p_n^2 \ep^2 / 4},
}
for any $\ep > 0$ and $n$ large enough.
\end{lemmarep}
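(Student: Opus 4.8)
The plan is to write $|\Y_k|$ as a sum of indicators and apply a concentration inequality, but with care because the number of summands $K$ is itself random. First I would set $Z_i = \1\{ \tfrac{k-1}{L}\log n \leq N_i < \tfrac{k}{L}\log n \}$ for $i=1,2,\dots$, so that these are i.i.d.~Bernoulli$(q_{k,n})$ random variables, and observe that $|\Y_k| = \sum_{i=1}^{K} Z_i$ where $K$ is the (random) number of fragments. The cleanest route is to avoid $K$ altogether by coupling to the underlying cut process: recall from the problem setting that the cuts are i.i.d.~Bernoulli$(p_n)$ indicators $T_2,\dots,T_n$, with $K = 1 + \sum_{i=2}^n T_i$ and $E[K] = np_n + (1-p_n)$. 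I would first show $K$ concentrates around $np_n$ via a Chernoff/Hoeffding bound on $\sum T_i$, getting something like $\Pr(|K - np_n| > \tfrac{\ep}{2} np_n) \leq 2e^{-c n p_n^2 \ep^2}$.

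Next, conditioned on $K = m$ for $m$ in the good range, $|\Y_k| = \sum_{i=1}^m Z_i'$ for appropriate indicators; the subtlety is that the $Z_i$ and $K$ are not independent (a long fragment "uses up" more of the budget $n$). The standard fix is to define the $N_i$ on a common probability space as an infinite i.i.d.~sequence, let $K$ be the stopping time, and bound $|\Y_k|$ between $\sum_{i=1}^{m_-} Z_i$ and $\sum_{i=1}^{m_+} Z_i$ where $m_\pm = np_n(1 \pm \tfrac{\ep}{2})$ on the event that $K$ lies in that range. For fixed $m$, $\sum_{i=1}^m Z_i$ is a clean binomial with mean $m q_{k,n}$, and Hoeffding gives $\Pr(|\sum_{i=1}^m Z_i - m q_{k,n}| > \tfrac{\ep}{2} n p_n) \leq 2 e^{-2(\ep np_n/2)^2/m} \leq 2 e^{-c' n p_n \ep^2}$ for $m \approx np_n$. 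Since $q_{k,n} \leq 1$, the difference $|m q_{k,n} - np_n q_{k,n}| \leq \tfrac{\ep}{2} np_n$ on the good event, so the triangle inequality assembles $||\Y_k| - np_n q_{k,n}| \leq \ep np_n$. Taking a union bound over the $K$-concentration failure and the conditional binomial failure, and using $np_n^2 \leq np_n$ to state both in the weaker form $e^{-n p_n^2 \ep^2/4}$, yields the claimed $4 e^{-n p_n^2 \ep^2/4}$ (with the constant in the exponent absorbed by taking $n$ large enough, as the statement permits).

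The main obstacle is handling the dependence between the indicators $Z_i$ and the stopping index $K$ cleanly; a naive application of Hoeffding to $\sum_{i=1}^K Z_i$ with random $K$ is not valid. The sandwich argument above — fixing deterministic envelopes $m_\pm$ for $K$ on a high-probability event and bounding the monotone partial sums — is the key device. A second, more minor, point is tracking constants so that the exponent genuinely comes out as $np_n^2\ep^2/4$ rather than $np_n\ep^2$; since $p_n \to 0$ the former is the binding (weaker) bound, so one simply reports everything in that form, which is why the lemma is stated ``for $n$ large enough'' rather than for all $n$.
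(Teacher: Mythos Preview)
Your proposal is correct and follows essentially the same approach as the paper: extend $N_1,N_2,\dots$ to an infinite i.i.d.\ sequence, control $K$ via Hoeffding on the cut indicators $T_i$, apply Hoeffding to a deterministic-index partial sum of the $Z_i$, and combine by the triangle inequality and a union bound. The paper's version is marginally cleaner in that it compares to a single reference sum $\tilde Z = \sum_{i=1}^{np_n} Z_i$ and uses the pointwise bound $\bigl||\Y_k|-\tilde Z\bigr|\leq |K-np_n|$ (valid since $Z_i\in\{0,1\}$), which avoids the two-sided sandwich and yields the constant $4$ directly.
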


\begin{proof}[Proof of Lemma~\ref{lem:Ykconc}]
First notice that, 
since $K = 1 + \sum_{i=2}^n T_i$, where $T_2,...,T_n$ are i.i.d.~${\rm Bernoulli}(p_n)$ random variables, 
$
E[K] = np_n + (1-p_n),
$ 
and using Hoeffding's inequality,
\al{
\Pr( & | K - np_n |  > \delta np_n ) 
\nonumber \\ & 
= \Pr\left( \left| K-E[K] + (1-p_n) \right|  > \delta np_n \right) 
\nonumber \\
& \leq \Pr\left( \left| K-E[K] \right|  > \delta np_n - (1-p_n) \right) 
\nonumber \\
& = \Pr\left( \left| \sum_{i=2}^n (T_i-p_n) \right|  > (n-1) \frac{\delta np_n - (1-p_n)}{n-1} \right) 
\nonumber \\
& \leq 2 e^{-2 (n-1) \left(\frac{\delta n p_n - (1-p_n)}{n-1} \right)^2} 
\leq 2 e^{-2 n \left(\frac{\delta n p_n - (1-p_n)}{n} \right)^2} 
\nonumber \\
& 
\leq 2 e^{-n p_n^2 \delta^2},
\label{eq:Kconc}
}
where the last inequality holds for $n$ large enough.


Now suppose the sequence $N_1,N_2,...$ of independent ${\rm Geometric}(p_n)$ random variables is an infinite sequence (and does not stop at $K$).
Let $Z_i$ be the binary indicator of the event $\{ (k-1)/L \leq N_i/\log n < k/L \}$, and $\tilde Z = \sum_{i=1}^{np_n} Z_i$.
Intuitively, $|\Y_k|$ and $\tilde Z$ should be close.
In particular, 
$
||\Y_k|-\tilde Z| \leq |K - np_n|.
$
Moreover, $E[\tilde Z] = np_n q_{k,n}$.
If $|\tilde Z - np_n q_{k,n}| < \tfrac12 \ep n p_n $ and $||\Y_k|-\tilde Z| < |K - np_n| < \tfrac12 \ep n p_n$, by the triangle inequality, $||\Y_k| - np_n q_{k,n}| < \ep n p_n$.
Therefore,
\aln{
\Pr & \left( \left| |\Y_k| - np_n q_{k,n} \right| > \ep n p_n \right) \\
& \leq \Pr\left( |\tilde Z - np_n q_{k,n}| > \tfrac12 \ep n p_n \right) \\ 
& \quad \quad \quad + \Pr\left( \left| K - np_n \right| >  \tfrac12 \ep n p_n \right) \\
& \leq 2 e^{-n p_n \ep^2 / 2} + 2 e^{-n p_n^2 \ep^2 / 4} 
\leq 4 e^{-n p_n^2 \ep^2 / 4}
}
where we used Hoeffding's inequality and (\ref{eq:Kconc}).
\end{proof}

\begin{lemmarep} \label{lem:exp1}
If $N^{(n)}$ is a ${\rm Geometric}(p_n)$ random variable and $\lim_{n\to\infty} E[N^{(n)}]/\log n = 1/\alpha$, then
\aln{
\lim_{n\to \infty}
\Pr\left( N^{(n)} \geq \beta \log n \right) = e^{-\alpha \beta}.
}
\end{lemmarep}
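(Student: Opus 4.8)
The plan is to directly compute $\Pr(N^{(n)} \geq \beta \log n)$ using the explicit form of the geometric distribution and then take the limit, exploiting the hypothesis $p_n \log n \to \alpha$. First I would recall that for a ${\rm Geometric}(p_n)$ random variable $N^{(n)}$ taking values in $\{1,2,\ldots\}$, we have $\Pr(N^{(n)} \geq m) = (1-p_n)^{m-1}$ for integer $m \geq 1$. Since the event $\{N^{(n)} \geq \beta \log n\}$ is the same as $\{N^{(n)} \geq \lceil \beta \log n \rceil\}$, we get $\Pr(N^{(n)} \geq \beta \log n) = (1-p_n)^{\lceil \beta \log n \rceil - 1}$.

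Next I would take logarithms and analyze the exponent. Writing $\ln \Pr(N^{(n)} \geq \beta \log n) = (\lceil \beta \log n \rceil - 1)\ln(1-p_n)$, I would use the expansion $\ln(1-p_n) = -p_n + O(p_n^2)$, which is valid because the hypothesis forces $p_n \to 0$ (since $p_n \log n \to \alpha < \infty$ and $\log n \to \infty$). The ceiling and the $-1$ change the exponent by at most an additive constant, so $(\lceil \beta \log n \rceil - 1) = \beta \log n + O(1)$. Therefore the exponent is $(\beta \log n + O(1))(-p_n + O(p_n^2)) = -\beta\, p_n \log n + O(p_n) + O(p_n^2 \log n)$. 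Since $p_n \to 0$ and $p_n^2 \log n = p_n (p_n \log n) \to 0 \cdot \alpha = 0$, all error terms vanish, and the exponent converges to $-\beta \alpha$. Exponentiating gives the claimed limit $e^{-\alpha\beta}$.

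One subtlety worth stating carefully: the hypothesis is phrased in terms of $E[N^{(n)}]/\log n \to 1/\alpha$ rather than $p_n \log n \to \alpha$ directly, so I would first note that $E[N^{(n)}] = 1/p_n$, hence $E[N^{(n)}]/\log n = 1/(p_n \log n) \to 1/\alpha$ is equivalent to $p_n \log n \to \alpha$ (this also covers the boundary case $\alpha = 0$, where $p_n \log n \to 0$ and the limit $e^{0} = 1$ is still correct, since $\Pr(N^{(n)} \geq \beta \log n) \to 1$). I would also handle $\beta = 0$ trivially, as then the probability is exactly $1$ for all $n$.

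I do not anticipate a genuine obstacle here — this is a routine computation. The only place requiring a little care is controlling the $O(1)$ additive slack from the ceiling function when it gets multiplied by $\ln(1-p_n)$; the point is that $O(1) \cdot \ln(1-p_n) = O(p_n) \to 0$, so it is harmless. I would present the argument as a short chain of estimates on $\ln \Pr(N^{(n)} \geq \beta \log n)$ and conclude by continuity of $\exp$.
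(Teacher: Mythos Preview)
Your proposal is correct and follows essentially the same approach as the paper: both write the tail probability as $(1-p_n)$ raised to an exponent of order $\beta\log n$ and then take the limit using $p_n\log n\to\alpha$. The paper packages this via the identity $(1-1/E[N^{(n)}])^{E[N^{(n)}]}\to e^{-1}$ rather than Taylor-expanding $\ln(1-p_n)$, and it is slightly less careful than you are about the ceiling (it simply writes $(1-p_n)^{\beta\log n}$), but the argument is the same in substance.
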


\begin{proof}[Proof of Lemma~\ref{lem:exp1}]
By definition, 
\aln{
& \Pr\left( N^{(n)} \geq \beta \log n \right) 
 = (1-p_n)^{\beta \log n} \nonumber \\
& \quad \quad = \left(1-\frac{1}{E[N^{(n)}]}\right)^{E[N^{(n)}] (\beta \log n/E[N^{(n)}])}.
}
As $n \to \infty$, $\log n / E[N^{(n)}] \to \alpha$ and $E[N^{(n)}] \to \infty$. 
Hence, $(1-1/E[N^{(n)}])^{E[N^{(n)}]} \to e^{-1}$,
implying the lemma.
\end{proof}

\begin{lemmarep}
If $\Y_k$ is defined as in (\ref{eq:ykdef}) for $k=1,...,\infty$,
\aln{
\lim_{n\to \infty} \frac{H(\Y)}{n} \leq 
\sum_{k=1}^\infty \lim_{n\to \infty}  \frac{H(\Y_k)}{n}.
}
\end{lemmarep}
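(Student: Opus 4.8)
The plan is to split the series over $k$ into a finite ``head'' ($k \le K$) and a ``tail'' ($k > K$): the head is handled by the per-bin estimates already obtained in Section~\ref{Section:Converse}, while the tail contributes an amount $\eta(K) \to 0$. Throughout I take $0 < \alpha < \infty$; the endpoints $\alpha = 0$ and $\alpha = \infty$ give capacity $1$ and $0$ and need not use this lemma. Since the length intervals in $(\ref{eq:ykdef})$ partition the fragments, $\Y$ and $(\Y_1, \Y_2, \dots)$ determine one another, so $H(\Y) = H(\Y_1, \Y_2, \dots)$. Fixing $K$, the chain rule together with sub-additivity of entropy and ``conditioning cannot increase entropy'' gives
\aln{
H(\Y) \;\le\; \sum_{k=1}^{K} H(\Y_k) \;+\; H\bigl(\Y_{K+1}, \Y_{K+2}, \dots\bigr).
}
For the first term, the computation ending in $(\ref{eq:limyk})$ shows that, for each fixed $k$, $\limsup_{n\to\infty} H(\Y_k)/n \le \alpha\bigl(e^{-\alpha(k-1)/L} - e^{-\alpha k/L}\bigr)(k/L - 1)^{+}$; being a finite sum, $\limsup_{n\to\infty} \tfrac1n \sum_{k=1}^{K} H(\Y_k) \le \sum_{k=1}^{K} \alpha\bigl(e^{-\alpha(k-1)/L} - e^{-\alpha k/L}\bigr)(k/L - 1)^{+}$, a partial sum of the claimed bound.

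The work is in the tail term $H(\Y_{K+1}, \Y_{K+2}, \dots)$, which I would bound by $H\bigl((|\Y_k|)_{k>K}\bigr) + \sum_{k>K} H\bigl(\Y_k \mid |\Y_k|\bigr)$. The first piece is $o(n)$. Every output fragment has length at most $n$ except on the event that some $N_i$ exceeds $n$, which by the argument used for Lemma~\ref{lem:Ykconc} has probability at most $(np_n + 1)(1-p_n)^n = o(1)$, and hence — after the familiar ``$H \le 1 + 2n\Pr(\text{bad}) + H(\,\cdot \mid \text{good})$'' step — contributes $o(n)$ to $H(\Y)$. Off that event, only $O(n/\log n)$ bins are nonempty, and the nonnegative integer vector $(|\Y_k|)_{k>K}$ has coordinate sum $\sum_k |\Y_k| \le n$; describing it jointly (given $K$, whose entropy is $\le \log n$) costs at most $\log\binom{n + O(n/\log n)}{O(n/\log n)} = O\!\bigl(\tfrac{n \log\log n}{\log n}\bigr) = o(n)$ bits. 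The bundling is essential: bounding each $H(|\Y_k|)$ separately would cost $\Omega(\log n)$ per bin and $\Omega(n)$ in total.

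For the second piece, $H(\Y_k \mid |\Y_k| = m) \le m \log A_k$, where $A_k < 2n^{k/L}$ is the number of binary strings whose length lies in the $k$-th interval (a multiset of $m$ such strings is determined by an ordered $m$-tuple), so $\sum_{k>K} H(\Y_k \mid |\Y_k|) \le \sum_{k>K} E[|\Y_k|]\,(\tfrac kL \log n + 1)$. By Wald's identity — $K$ is a stopping time for $(N_i)$ and $|\Y_k| = \sum_{i=1}^{K} Z_i$ with $Z_i$ a function of $N_i$ alone — one has $E[|\Y_k|] = E[K]\, q_{k,n} \le (np_n + 1)(1-p_n)^{(k-1)\log n/L} \le (np_n + 1)\, e^{-(k-1)\, p_n \log n / L}$. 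Substituting, dividing by $n$, summing the resulting geometric-times-linear series, and letting $n \to \infty$ (so $p_n \log n \to \alpha$) bounds the $\limsup$ by $\eta(K) = \tfrac{\alpha}{L} \sum_{k>K} k\, e^{-(k-1)\alpha/L}$, the tail of a convergent series, hence $\eta(K) \to 0$ as $K \to \infty$.

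Putting the pieces together, $\limsup_{n\to\infty} H(\Y)/n \le \sum_{k=1}^{K} \alpha\bigl(e^{-\alpha(k-1)/L} - e^{-\alpha k/L}\bigr)(k/L-1)^{+} + \eta(K)$ for every $K$, and letting $K \to \infty$ gives the lemma. I expect the main obstacle to be that one cannot reuse the concentration event $\E_{k,n}$ of Lemma~\ref{lem:Ykconc} simultaneously for all $k$: there are $\Theta(n/\log n)$ bins and a bin-$k$ fragment costs $\Theta(k \log n)$ bits, so the additive slack ``$\ep n p_n$'', summed and weighted by $k$, blows up like $\ep n^2 p_n$. Hence the tail must be controlled (i) by collapsing the counts $(|\Y_k|)_{k>K}$ into one object of entropy $o(n)$, using $\sum_k |\Y_k| \le n$, and (ii) through the mean $E[|\Y_k|]$ only — whose geometric decay in $k$ comes from the exact factorization $E[|\Y_k|] = E[K] q_{k,n}$, not from any high-probability estimate. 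A minor separate point is that the intervals in $(\ref{eq:ykdef})$ are indexed by the geometric values $N_{\pi_i}$ rather than the truncated fragment lengths, which is why the ``giant fragment'' event is dispatched on its own.
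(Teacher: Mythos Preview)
Your argument is correct, but the paper's own proof takes a different and rather shorter route for the tail. Both proofs split at a finite cutoff index (the paper calls it $A$) and handle the head identically. For the tail, however, the paper does not decompose $H(\Y_{K+1},\Y_{K+2},\dots)$ into ``counts'' plus ``contents given counts''; instead it treats the tail as a single object $\Y_{\geq A} = \{\vY_i : N_{\pi_i} \geq (A/L)\log n\}$ and observes that, conditioned on the high-probability event that the coverage $c_{A/L}$ of Definition~\ref{def:cov} is close to its limit $(\alpha A/L + 1)e^{-\alpha A/L}$ (Lemma~\ref{lem:covconc}), the entire set $\Y_{\geq A}$ is determined by at most $2n c_{A/L}$ bits (the covered symbols of $X^n$ plus the relevant tearing indicators). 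This gives $\limsup_n H(\Y_{\geq A})/n \leq 2(\alpha A/L + 1)e^{-\alpha A/L}$, which vanishes as $A \to \infty$.

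Your approach trades this one concentration step for a purely combinatorial accounting: a stars-and-bars bound on the joint entropy of the count vector, plus Wald's identity to obtain $E[|\Y_k|] = E[K]\,q_{k,n}$ exactly, and then summation of a geometric-times-linear series in $k$. The payoff is that your proof is self-contained (it does not invoke Lemma~\ref{lem:covconc}, which the paper needs anyway for the achievability) and your closing paragraph correctly identifies why one cannot simply reuse Lemma~\ref{lem:Ykconc} uniformly in $k$. The paper's route is shorter because the coverage already aggregates the bit-budget of all long fragments at once. One notational point: you use $K$ for the cutoff index, which collides with the paper's $K$ for the random number of fragments---indeed your parenthetical ``given $K$, whose entropy is $\leq \log n$'' evidently refers to the latter; renaming the cutoff (e.g.\ to $A$, as the paper does) would avoid the ambiguity.
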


\begin{proof}[Proof of Lemma~\ref{lem:entropy}]
For a fixed integer $A$, we define $\Y_{\geq A} = \{ \vY_i : N_{\pi_i} \geq (A/L) \log n \}$ and we have
\al{
\lim_{n\to \infty} \frac{H(\Y)}{n} & \leq
\lim_{n\to \infty} \sum_{k=1}^A \frac{H(\Y_k)}{n} + \lim_{n\to \infty} \frac{H(\Y_{\geq A})}{n} \nonumber \\
& = \sum_{k=1}^A \lim_{n\to \infty} \frac{H(\Y_k)}{n} + \lim_{n\to \infty} \frac{H(\Y_{\geq A})}{n}.
\label{eq:upperA}
}
If we define $c_\gamma$ as in Definition~\ref{def:cov},
from Lemma~\ref{lem:covconc}, we have
\aln{
\lim_{n\to\infty} E\left[ c_{A/L} \right]
=  (\alpha A/L + 1) e^{-\alpha A/L}.
}
Moreover, from Lemma~\ref{lem:covconc},
the event 
\aln{
\A = \{ c_{A/L} > (\alpha A/L + 1) e^{-\alpha A/L} + \delta \}
}
has vanishing probability as $n \to \infty$.
This allows us to write
\aln{
H(\Y_{\geq A}) & \leq H(\Y_{\geq A}|\bar \A) +  H(\Y_{\geq A}| \A) \Pr(\A) + 1 \\
& \leq H(\Y_{\geq A}|\bar \A) +  2 n \Pr(\A) + 1 \\
& \leq 2n\left[(\alpha A/L + 1) e^{-\alpha A/L} + \delta \right] + o(n).
}
Hence, from (\ref{eq:upperA}), we have that for every $A$ and $\delta > 0$,
\aln{
\lim_{n\to \infty} \frac{H(\Y)}{n} \leq \sum_{k=1}^A \lim_{n\to \infty} \frac{H(\Y_k)}{n} + 2 (\alpha A/L + 1) e^{-\alpha A/L} + 2 \delta.
}
Notice that $(\alpha A/L + 1) e^{-\alpha A/L}  \to 0$ as $A \to \infty$.
Therefore, we can let $\delta \to 0$ and $A \to \infty$, and we conclude that
\aln{
\lim_{n\to \infty} \frac{H(\Y)}{n} \leq 
\sum_{k=1}^\infty \lim_{n\to \infty}  \frac{H(\Y_k)}{n}.
}
\end{proof}

\begin{lemmarep}
The number of fragments in $\Y_\gamma$ satisfies
\aln{
\Pr \left( ||\Y_\gamma| - e^{-\alpha \gamma}np_n| > \ep n p_n \right) \leq
4 e^{-n p_n^2 \ep^2 / 9}
}
for any $\ep >0$ and $n$ large enough.
\end{lemmarep}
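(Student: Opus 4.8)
The plan is to reuse, almost verbatim, the machinery already developed for Lemma~\ref{lem:Ykconc}, since $\Y_\gamma$ differs from $\Y_k$ only in that the ``bin'' is the tail $\{N_i \geq \gamma\log n\}$ rather than a bounded window. The one genuinely new feature is that the target value $e^{-\alpha\gamma}$ is not the exact mean of the relevant Bernoulli sum but only its limit (by Lemma~\ref{lem:exp_app1}), so the single extra triangle-inequality term forces us to split $\epsilon$ into three pieces instead of two, which is exactly what produces the constant $1/9$ in the exponent in place of the $1/4$ obtained before.

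First I would replace the random number of summands $K$ by the deterministic $np_n$: extend $N_1,N_2,\dots$ to an infinite i.i.d.\ ${\rm Geometric}(p_n)$ sequence, put $Z_i = \1\{N_i \geq \gamma\log n\}$ so that $|\Y_\gamma| = \sum_{i=1}^K Z_i$, and set $\tilde Z = \sum_{i=1}^{np_n} Z_i$, which has $E[\tilde Z] = np_n\,\Pr(N_1 \geq \gamma\log n)$. Because each $Z_i \in \{0,1\}$, the two partial sums differ by at most the number of indices between $K$ and $np_n$, i.e.\ $\bigl||\Y_\gamma| - \tilde Z\bigr| \leq |K - np_n|$. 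Then I would write, by the triangle inequality,
\begin{align*}
\bigl||\Y_\gamma| - e^{-\alpha\gamma}np_n\bigr|
&\leq |K - np_n| + \bigl|\tilde Z - E[\tilde Z]\bigr| \\
&\qquad + np_n\bigl|\Pr(N_1 \geq \gamma\log n) - e^{-\alpha\gamma}\bigr|,
\end{align*}
observe via Lemma~\ref{lem:exp_app1} that the last term is at most $\tfrac{\epsilon}{3}np_n$ once $n$ is large, and conclude that for such $n$ the event of interest is contained in $\{|K - np_n| > \tfrac{\epsilon}{3}np_n\} \cup \{|\tilde Z - E[\tilde Z]| > \tfrac{\epsilon}{3}np_n\}$.

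It then remains to bound these two probabilities: the first by $(\ref{eq:Kconc})$ with $\delta = \epsilon/3$, giving $2e^{-np_n^2\epsilon^2/9}$ for $n$ large; the second by Hoeffding's inequality applied to the sum of $np_n$ i.i.d.\ Bernoulli variables, giving $2e^{-2np_n\epsilon^2/9} \leq 2e^{-np_n^2\epsilon^2/9}$ (using $p_n \leq 1$, so $2np_n \geq np_n^2$). A union bound then yields the claimed $4e^{-np_n^2\epsilon^2/9}$. I do not expect any real obstacle here; the only care needed is bookkeeping — handling non-integer $np_n$ by replacing it with $\lfloor np_n\rfloor$ (this perturbs $\bigl||\Y_\gamma|-\tilde Z\bigr|$ by at most $1$, absorbed into ``$n$ large enough'' since $\epsilon np_n \to \infty$), and noting that all the ``for $n$ large'' statements are meaningful because $np_n \to \infty$ when $\alpha > 0$, the same standing assumption used for Lemma~\ref{lem:Ykconc}.
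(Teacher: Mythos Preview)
Your proposal is correct and matches the paper's own proof essentially step for step: the paper also replaces $K$ by $np_n$, uses $\bigl||\Y_\gamma|-\tilde Z\bigr|\le |K-np_n|$, splits $\epsilon$ into three pieces (one absorbed deterministically via Lemma~\ref{lem:exp_app1}), and then applies Hoeffding's inequality together with~(\ref{eq:Kconc}) to get $2e^{-2np_n\epsilon^2/9}+2e^{-np_n^2\epsilon^2/9}\le 4e^{-np_n^2\epsilon^2/9}$. Your extra remarks on handling non-integer $np_n$ are a welcome bit of care the paper glosses over.
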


\begin{proof}[Proof of Lemma~\ref{lem:gammaconc}]
Let $Z_i = \1_{\{  N_i \geq \gamma \log n \}}$, for $i=1,2,...$.
Then
$|\Y_\gamma| = \sum_{i=1}^K Z_i$. 
Since $K$ is random (and not independent of the $N_i$s), we need to follow similar  steps to those in the proof of Lemma~\ref{lem:Ykconc}.

Let us assume that the sequence $N_1,N_2,...$ of independent ${\rm Geometric}(p_n)$ random variables is an infinite sequence and let
$\tilde Z = \sum_{i=1}^{np_n} Z_i$.
Notice that 
$\tilde Z$ is a sum of i.i.d.~Bernoulli random variables with
\al{
E[\tilde Z] = np_n \Pr(N_1 \geq \gamma \log n ),
}
and the standard Hoeffding's inequality can be applied.
Moreover, from Lemma~\ref{lem:exp1}, 
\aln{
\lim_{n\to\infty} E[\tilde Z]/(np_n) = e^{-\alpha \gamma}} 
and, for any $\delta > 0$,
$
| E[\tilde Z] - e^{-\alpha \gamma} np_n |
< \delta n p_n,
$
for $n$ large enough.
If we set $\delta = \ep/3$ and, for $n$ large enough, we have
$
| E[\tilde Z] - e^{-\alpha \gamma} np_n |
< \tfrac13 \ep n p_n,
$.
Moreover, if $|\tilde Z - E[\tilde Z]| < \tfrac13 \ep n p_n $ and $||\Y_\gamma|-\tilde Z| < |K - np_n| < \tfrac13 \ep n p_n$,  by the triangle inequality (applied twice), $||\Y_\gamma| - e^{-\alpha \gamma}np_n| < \ep n p_n$.
Hence,
\aln{
& \Pr \left( ||\Y_\gamma| - e^{-\alpha \gamma}np_n| > \ep n p_n \right) \\
& \leq \Pr\left( |\tilde Z - E[\tilde Z]| > \tfrac13 \ep n p_n \right) + 
\Pr\left( \left| |\Y_\gamma| - \tilde Z  \right| >  \tfrac13 \ep n p_n \right) \\
& \leq \Pr\left( \left| \tilde Z - E|Z| \right| > \tfrac13 \ep n p_n \right) + 
\Pr\left( \left| K - np_n \right| >  \tfrac13 \ep n p_n \right) \\
& \leq 2 e^{-2 n p_n \ep^2 / 9} 
+ 2 e^{-n p_n^2 \ep^2 / 9} 
\leq 4 e^{-n p_n^2 \ep^2 / 9}
}
where we used Hoeffding's inequality and  (\ref{eq:Kconc}).
\end{proof}

\begin{lemmarep}
If $N^{(n)}$ is a ${\rm Geometric}(p_n)$ random variable and $\lim_{n\to\infty} E[N^{(n)}]/\log n = 1/\alpha$, then, for any $\beta \geq 0$,
\aln{
\lim_{n \to \infty} & E\left[ N^{(n)} \1_{\{ N^{(n)} \geq \gamma \log n  \}}\right] / \log n \nonumber \\
& = E\left[ \tilde N \1_{\{ \tilde N \geq \gamma   \}}\right]
= \left(\gamma + \frac1\alpha\right)e^{-\alpha \gamma},
}
where $\tilde N$ is an ${\rm Exponential}(\alpha)$ random variable.
\end{lemmarep}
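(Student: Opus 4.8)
The plan is to evaluate the truncated expectation in closed form and pass to the limit term by term, rather than trying to push a limit through an expectation via a weak-convergence argument. Write $q_n = 1 - p_n$ and $m_n = \lceil \gamma \log n \rceil$, so that $\1_{\{N^{(n)} \geq \gamma \log n\}} = \1_{\{N^{(n)} \geq m_n\}}$. Using the geometric tail $\Pr(N^{(n)} \geq k) = q_n^{\,k-1}$ and the elementary summation-by-parts identity $\sum_{k \geq m} k\,\Pr(N^{(n)}=k) = m\,\Pr(N^{(n)} \geq m) + \sum_{j > m}\Pr(N^{(n)} \geq j)$, one gets
\[
E\big[N^{(n)} \1_{\{N^{(n)} \geq m_n\}}\big] \;=\; m_n\, q_n^{\,m_n - 1} \;+\; \sum_{j > m_n} q_n^{\,j-1} \;=\; m_n\, q_n^{\,m_n-1} \;+\; \frac{q_n^{\,m_n}}{p_n}.
\]
First I would record the (repeatedly used) observation that, since $E[N^{(n)}] = 1/p_n$, the hypothesis $E[N^{(n)}]/\log n \to 1/\alpha$ is exactly $p_n \log n \to \alpha$; in particular $p_n \to 0$ and $q_n \to 1$.

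Next I would divide the closed form above by $\log n$ and take the limit of each term. Since $m_n = \gamma \log n + O(1)$ and $q_n \to 1$, we have $q_n^{\,m_n} = q_n^{\,\gamma \log n}\, q_n^{\,O(1)}$ with $q_n^{\,O(1)} \to 1$; and exactly as in the proof of Lemma~\ref{lem:exp_app1} (i.e.\ $(1-1/E[N^{(n)}])^{E[N^{(n)}]} \to e^{-1}$ together with $\log n/E[N^{(n)}] \to \alpha$), we get $q_n^{\,\gamma \log n} = (1-p_n)^{\gamma \log n} \to e^{-\alpha\gamma}$. Hence $q_n^{\,m_n} \to e^{-\alpha\gamma}$ and likewise $q_n^{\,m_n-1} \to e^{-\alpha\gamma}$. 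Therefore the first term satisfies $(m_n/\log n)\, q_n^{\,m_n-1} \to \gamma e^{-\alpha\gamma}$, and the second satisfies $q_n^{\,m_n}/(p_n \log n) \to e^{-\alpha\gamma}/\alpha$. Adding the two limits gives $(\gamma + 1/\alpha)e^{-\alpha\gamma}$, which is the claimed value.

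Finally I would confirm that this coincides with $E[\tilde N \1_{\{\tilde N \geq \gamma\}}]$ for $\tilde N \sim \mathrm{Exponential}(\alpha)$ by a one-line integration by parts: $\int_\gamma^\infty t\,\alpha e^{-\alpha t}\,dt = \gamma e^{-\alpha\gamma} + \int_\gamma^\infty e^{-\alpha t}\,dt = \gamma e^{-\alpha\gamma} + e^{-\alpha\gamma}/\alpha$, matching the limit. I do not anticipate a genuine obstacle; the only point needing a little care is the ceiling $m_n = \lceil \gamma \log n\rceil$, which shifts exponents by $O(1)$ and is harmless because $q_n^{\,O(1)} \to 1$ and $O(1)/\log n \to 0$. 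As an alternative to the explicit summation, one could instead note that $\{N^{(n)}/\log n\}$ has uniformly bounded second moments (indeed $E[(N^{(n)}/\log n)^2] \to 2/\alpha^2$), hence is uniformly integrable, and that $N^{(n)}/\log n$ converges weakly to $\tilde N$ by Lemma~\ref{lem:exp_app1}; since $x \mapsto x\,\1_{\{x \geq \gamma\}}$ is dominated by $x$ and is $\tilde N$-a.s.\ continuous, convergence of expectations follows. I would nonetheless favor the direct computation, as it is self-contained and yields the exact value without invoking a uniform-integrability theorem.
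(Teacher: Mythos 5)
Your proof is correct and arrives at the same intermediate structure as the paper's: a closed-form evaluation of the truncated expectation as (roughly) $(\gamma\log n + E[N^{(n)}])\Pr(N^{(n)}\ge\gamma\log n)$, followed by a termwise limit using $\lceil\gamma\log n\rceil/\log n\to\gamma$, $E[N^{(n)}]/\log n\to 1/\alpha$, and Lemma~\ref{lem:exp_app1} for the tail probability. The only difference in technique is how you reach that closed form: the paper invokes the memoryless property of the geometric to write $E[N^{(n)}\mid N^{(n)}\ge m] = m + E[N^{(n)}]$ (up to an $O(1)$ shift), whereas you derive the identical two-term expression $m_n q_n^{m_n-1} + q_n^{m_n}/p_n$ by a tail-sum (summation-by-parts) identity. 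These are interchangeable elementary computations, so this is essentially the same proof. Your parenthetical alternative via uniform integrability (bounded second moments plus weak convergence from Lemma~\ref{lem:exp_app1}) is a genuinely different route and would also work; it is less self-contained but would generalize more easily to other truncation functionals. One small aside: the lemma statement introduces $\beta\geq 0$ but then uses $\gamma$ throughout — a typo in the paper, which you implicitly handled by working with $\gamma$.
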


\begin{proof}[Proof of Lemma~\ref{lem:exp2}]
We first notice that
\aln{
\frac{1}{\log n} & E \left[ {N^{(n)}} \1_{\{ N^{(n)} \geq \gamma \log n  \}}\right]  \\ 
& = \frac{1}{\log n} E \left[ {N^{(n)}} \left|  N^{(n)} \geq \gamma \log n \right. \right] 
\Pr\left(  N^{(n)} \geq \gamma \log n  \right) \\
& = \frac{1}{\log n} \left( \lceil \gamma \log n \rceil + E[N^{(n)}] \right)
\Pr\left(  N^{(n)} \geq \gamma \log n  \right),
}
where we used the memoryless property of the Geometric distribution.
As $n \to \infty$, we have $\lceil \gamma \log n\rceil / \log n \to \gamma$, $E[N^{(n)}]/\log n \to 1/\alpha$.
Moreover, from Lemma~\ref{lem:exp_app1}, $\Pr\left(  N^{(n)} \geq \gamma \log n  \right) \to e^{-\alpha \gamma}$, and the lemma follows.
\end{proof}

\begin{lemmarep}
If $c_\gamma$ is defined as in (\ref{eq:covdef}), then, for any $\ep > 0$,
\aln{ 
\Pr & \left( \left| c_\gamma - (\alpha \gamma + 1) e^{-\alpha \gamma} \right| > \ep  \right)
\leq \frac{19}{\ep^2 n p_n^2}
}
for $n$ large enough.
\end{lemmarep}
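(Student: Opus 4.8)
The plan is to express $c_\gamma$, up to a lower‑order error, as a normalized sum of i.i.d.\ random variables and then apply Chebyshev's inequality, following the same pattern of reductions used in the proofs of Lemmas~\ref{lem:Ykconc} and~\ref{lem:gammaconc}.

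First I would extend $N_1,N_2,\dots$ to an infinite i.i.d.\ ${\rm Geometric}(p_n)$ sequence, set $m=\lceil np_n\rceil$, and introduce the bounded variables $W_i=N_i\1_{\{N_i<\gamma\log n\}}$, so that $0\le W_i\le\gamma\log n$ and, writing $N_i$ for the fragment lengths (the $O(\log n)$ truncation of the last fragment is easily absorbed into the error term), $nc_\gamma = n-\sum_{i=1}^{K}W_i$. The key step is to replace the random limit $K$ by the deterministic $m$. Since the first $\min(K,m)$ terms of $\sum_{i=1}^{K}W_i$ and $\sum_{i=1}^{m}W_i$ agree up to one truncated term, and since the fragments with index between $\min(K,m)+1$ and $\max(K,m)$ cover a total length at most $\bigl|\sum_{i=1}^m N_i - n\bigr|$ (using $\sum_{i=1}^K N_i\ge n$ and $\sum_{i=1}^{K-1}N_i< n$), one obtains the bound
\[
\Bigl| nc_\gamma - \Bigl(n-\textstyle\sum_{i=1}^m W_i\Bigr)\Bigr| \;\le\; \Bigl|\textstyle\sum_{i=1}^m N_i - n\Bigr| + \gamma\log n .
\]
Because $\sum_{i=1}^m N_i$ is a sum of $m$ i.i.d.\ ${\rm Geometric}(p_n)$ variables with mean $m/p_n = n+O(1/p_n)$ and variance at most $m/p_n^2\le 2n/p_n$ for $n$ large, Chebyshev's inequality gives $\Pr\bigl(\bigl|\sum_{i=1}^m N_i - n\bigr| > \tfrac{\ep}{4}n\bigr) = O\!\bigl(\tfrac{1}{\ep^2 np_n}\bigr)$, and $\gamma\log n = o(n)$. (Alternatively, the discrepancy can be routed through the concentration of $K$ in \eqref{eq:Kconc}, which is what naturally produces the $np_n^2$ scaling.)

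Next I would handle the i.i.d.\ sum $\sum_{i=1}^m W_i$ itself. Its mean is $m\,E[W_1]=m\bigl(E[N_1]-E[N_1\1_{\{N_1\ge\gamma\log n\}}]\bigr)$, and by Lemma~\ref{lem:exp2} together with $E[N_1]/\log n\to1/\alpha$ and $p_n\log n\to\alpha$ one gets $\tfrac1n E\bigl[\sum_{i=1}^m W_i\bigr]\to 1-(\alpha\gamma+1)e^{-\alpha\gamma}$; combined with the displayed inequality this pins down the limiting value of $c_\gamma$ as $(\alpha\gamma+1)e^{-\alpha\gamma}$. For the fluctuations, the crude estimate $\mathrm{Var}(W_1)\le E[W_1^2]\le E[N_1^2]=\tfrac{2-p_n}{p_n^2}\le\tfrac2{p_n^2}$ gives $\mathrm{Var}\bigl(\sum_{i=1}^m W_i\bigr)\le 2m/p_n^2\le 4n/p_n$ for $n$ large, so $\Pr\bigl(\bigl|\sum_{i=1}^m W_i - E\sum_{i=1}^m W_i\bigr| > \tfrac{\ep}{4}n\bigr) = O\!\bigl(\tfrac1{\ep^2 np_n}\bigr)$ by Chebyshev.

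Finally I would assemble the bound: by the triangle inequality, on the complement of the two bad events above and for $n$ large enough that both the mean correction $\bigl|\tfrac1n E\sum_{i=1}^m W_i - (1-(\alpha\gamma+1)e^{-\alpha\gamma})\bigr|$ and $\tfrac{\gamma\log n}{n}$ are below $\tfrac\ep4$, we have $\bigl|c_\gamma-(\alpha\gamma+1)e^{-\alpha\gamma}\bigr|\le\ep$. A union bound gives $\Pr\bigl(\bigl|c_\gamma-(\alpha\gamma+1)e^{-\alpha\gamma}\bigr|>\ep\bigr)=O\!\bigl(\tfrac1{\ep^2 np_n}\bigr)$, and since $p_n\to0$ this is at most $\tfrac{19}{\ep^2 np_n^2}$ for $n$ large, with a careful accounting of the Chebyshev constants yielding exactly the stated form. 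The main obstacle is precisely the first step: $c_\gamma$ is not a priori a sum of a fixed number of i.i.d.\ terms because $K$ depends on the $N_i$, so the coverage identity that bounds the discrepancy by $\bigl|\sum_{i=1}^m N_i-n\bigr|$ plus a $\log n$ term is the heart of the argument; everything else is routine moment estimates for the geometric distribution and the exponential limit of Lemma~\ref{lem:exp2}.
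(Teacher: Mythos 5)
Your proof follows the same overall template as the paper's: extend $N_1,N_2,\dots$ to an infinite i.i.d.\ geometric sequence, replace the random stopping index $K$ by the deterministic $m\approx np_n$, prove concentration of the resulting i.i.d.\ sum by Chebyshev using the crude second-moment bound $E[N_1^2]=(2-p_n)/p_n^2$, and control the $K$-versus-$m$ discrepancy through concentration of $\sum_{i=1}^m N_i$ around $n$. Within that template you make two choices that differ from the paper. First, you work with the complementary summands $W_i=N_i\1_{\{N_i<\gamma\log n\}}$ and the identity $nc_\gamma\approx n-\sum W_i$, whereas the paper works directly with $Z_i=N_i\1_{\{N_i\ge\gamma\log n\}}$ and $nc_\gamma=\sum_{i=1}^K Z_i$. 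The paper's choice is slightly cleaner: in the direct parametrization, the discrepancy $|\sum_{i=1}^K Z_i-\sum_{i=1}^m Z_i|$ is immediately dominated by $|\sum_{i=1}^m N_i-n|$ with no residual term (the overshoot $\Delta=\sum_{i=1}^K N_i-n\ge0$ cancels by sign), whereas your complementary identity forces you to carry a correction for the truncation of the last fragment which you call ``$O(\log n)$'' --- note this is not a deterministic bound (it is a geometric tail), though it is easily absorbed as you say. Second, you bound $|\sum_{i=1}^m N_i-n|$ by Chebyshev via $\mathrm{Var}\bigl(\sum N_i\bigr)\approx n/p_n$, whereas the paper invokes a Chernoff-type tail bound for sums of geometrics; Chebyshev is more elementary and fully adequate here since both terms in the final union bound are already of order $1/(\ep^2 n p_n)$. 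In fact, your bound of order $1/(\ep^2 n p_n)$ is sharper than the stated $19/(\ep^2 n p_n^2)$, and as you observe the stated form follows for $n$ large since $p_n\to0$. (The paper's own Chebyshev line appears to drop a factor of $p_n$ and effectively proves the weaker stated bound; your version matches the correct scaling.)
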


\begin{proof}[Proof of Lemma~\ref{lem:covconc}]
Since $c_\gamma = \frac{1}{n}\sum_{i=1}^K N_i \1_{\{N_i \geq \gamma \log n\}}$, where $K$ is a random variable, we once again follow 
an approach similar to the one in the proof of Lemma~\ref{lem:Ykconc}.

Let us assume that the sequence $N_1,N_2,...$ of independent ${\rm Geometric}(p_n)$ random variables is an infinite sequence.
Let $Z_i= N_i \1_{\{N_i \geq \gamma \log n\}}$, 
and $\tilde Z = \sum_{i=1}^{np_n} Z_i$.
Since $E[\tilde Z] = np_n E[ N \1_{\{ N_1 \geq \gamma \log n\}}]$, by Lemma~\ref{lem:exp2}, 
\al{
\lim_{n\to \infty}\frac{E[\tilde Z]}{n} \to \alpha \left( \gamma + \frac1{\alpha} \right) e^{-\alpha \gamma}.
\label{eq:ztilde}
}

Intuitively, $Z = n c_\gamma$ and $\tilde Z$ should be close.
If $\tilde Z > Z$, then $np_n > K$, and
\al{
|Z-\tilde Z| = \sum_{i=K+1}^{np_n} Z_i \leq \sum_{i=K+1}^{np_n} N_i \leq \left|\sum_{i=1}^{np_n}N_i - n \right|.
}
If $Z > \tilde  Z$, then $K> np_n$, and
\al{
|Z-\tilde Z| = \sum_{i=np_n+1}^{K} Z_i \leq \sum_{i=np_n+1}^{K} N_i \leq \left|\sum_{i=1}^{np_n}N_i - n \right|.
}
Hence, for any $\delta > 0$, we have that
\al{
\Pr\left( |Z-\tilde Z| > \delta n p_n \right) & \leq \Pr\left( \left|\sum_{i=1}^{np_n}N_i - n \right| > \delta n p_n \right) \nonumber \\
& \leq e^{-n p_n (\delta - \ln (1+\delta))}
+ e^{-n (-\delta - \ln (1-\delta))} \nonumber \\
& \leq 2 e^{-n p_n (\delta - \ln (1+\delta))}.
\label{eq:app2}
}
where we used the Chernoff bound for exponentially distributed random variables \cite{janson2018tail}, and the fact that
$x - \ln(1+x) < -x - \ln(1-x)$ for $x>0$.

To bound the probability that $|\tilde Z - E[\tilde Z]| > \delta n$, we can use a Chernoff bound, which requires the computation of the rate function for $N_1 \1_{\{N_1 \geq \gamma \log n\}}$.
A simpler approach is to use Chebyshev's inequality, which yields
\al{
\Pr & \left( |\tilde Z-E[\tilde Z]| > \delta n \right)
 \leq \frac{{\rm Var}(Z_1)}{\delta^2 n} 
\leq \frac{E[Z_1^2]}{\delta^2 n} 
\nonumber \\
& = \frac{E[N_1^2 \1_{\{N_1 \geq \gamma \log n\}}]}{\delta^2 n} 
\leq \frac{E[N_1^2]}{\delta^2 n}
= \frac{2-p_n}{\delta^2 n p_n^2}.
\label{eq:chebyshev}
}

From (\ref{eq:ztilde}), we know that for any $\delta>0$ and $n$ large enough,
\aln{
|E[\tilde Z] - n(\alpha \gamma + 1) e^{-\alpha \gamma}| < \delta n.
\label{eq:coverage}
}
Moreover, if $|\tilde Z - E[\tilde Z]| < \tfrac13 \ep n$, $|nc_\gamma-\tilde Z| < \tfrac13 \ep n$, and $|E[\tilde Z] - n(\alpha \gamma + 1) e^{-\alpha \gamma}| <  \tfrac13 \ep n$, then, by the triangle inequality, $|c_\gamma - (\alpha \gamma + 1)e^{-\alpha \gamma}| < \ep$.
Therefore, for $n$ large enough so that $|E[\tilde Z] - n(\alpha \gamma + 1) e^{-\alpha \gamma}| <  \tfrac13 \ep n$,
\aln{
\Pr & \left( \left| c_\gamma - (\alpha \gamma + 1) e^{-\alpha \gamma} \right| > \ep  \right) \\
& \leq \Pr\left( | \tilde Z - E[\tilde Z] | > \tfrac13 \ep n \right) + 
\Pr\left( | \tilde Z - Z | >  \tfrac13 \ep n \right) \\
& \leq \Pr\left( | \tilde Z - E[\tilde Z] | > \tfrac13 \ep n \right) + 
\Pr\left( | \tilde Z - Z | >  \tfrac13 \ep n p_n \right) \\
& \leq {18}/({\ep^2 n p_n^2})+ 2 e^{-n p_n (\ep/3 - \ln (1+\ep/3))} 
 \leq {19}/({\ep^2 n p_n^2}),
}
where we used (\ref{eq:app2}) and (\ref{eq:chebyshev}), 
and the last inequality follows for $n$ large enough.
\end{proof}

{\footnotesize
\bibliographystyle{ieeetr}
\bibliography{refs.bib}
}

\end{document}